\crefname{hypothesis}{Hypothesis}{Hypotheses}
\title{Discrete-time risk sensitive portfolio optimization with proportional transaction costs}
\author{Marcin Pitera\thanks{Institute of Mathematics, Jagiellonian University, Krakow, Poland,
  (\email{marcin.pitera@uj.edu.pl});  research supported by NCN grant 2020/37/B/ST1/00463.} \and {\L}ukasz Stettner\thanks{Institute of Mathematics, Polish Academy of Sciences, Warsaw, Poland,
  (\email{l.stettner@impan.pl});  research supported by NCN grant 2020/37/B/ST1/00463.}}
\def\cS{\mathcal{S}}
\def\cB{\mathcal{B}}
\def\bP{\mathbb{P}}
\def\bH{\mathbb{H}}
\def\bE{\mathbb{E}}
\def\bR{\mathbb{R}}
\def\bT{\mathbb{T}}
\def\bN{\mathbb{N}}
\def\namedlabel#1#2{\begingroup
    #2%
    \def\@currentlabel{#2}%
    \phantomsection\label{#1}\endgroup
}
\begin{document}

\maketitle

\begin{abstract}
In this paper we consider a discrete-time risk sensitive portfolio optimization over a long time horizon with proportional transaction costs. We show that within the log-return i.i.d. framework  the solution to a suitable Bellman equation exists under minimal assumptions and can be used to characterize the optimal strategies for both risk-averse and risk-seeking cases. Moreover, using numerical examples, we show how a Bellman equation analysis can be used to construct or refine optimal trading strategies in the presence of transaction costs.
\end{abstract}

\begin{keywords}
Risk sensitive portfolio, risk sensitive criterion, risk sensitive control, long time horizon, Bellman equation, portfolio optimization, transaction costs
\end{keywords}

\begin{AMS}
93E20, 91G10, 91G80, 49N60
\end{AMS}

\section{Introduction}
Quantitative portfolio management is an important part of mathematical finance. Stimulated by the seminal work~\cite{Mar1952}, this field has been consistently evolving during the last 70 years for both discrete and continuous time settings, see \cite{KolTutFab2014,Cha2017,Pri2007} and references therein for an overview. Among the considered portfolio optimisation frameworks, the risk sensitive portfolio optimisation is among the most recognised ones, see \cite{DavLle2014,BiePli1999,FleShe2000}. Given a wealth process $(W_t)$ and the risk-averse parameter $\gamma\neq 0$, the long-run version of risk sensitive criterion is defined as
\begin{equation}\label{eq:intro1}
\liminf_{t\to \infty}\tfrac{1}{t}\tfrac{1}{\gamma}\bE\left[ W_t^{\gamma}\right]\quad \textrm{or equivalently as }\quad \liminf_{t\to\infty}\tfrac{1}{t}\mu^{\gamma}(\ln W_t),
\end{equation}
where $\mu^{\gamma}(\cdot):=1/\gamma \cdot \bE[\exp(\gamma(\cdot))]$ is the entropic utility. The optimality criterion presented in \eqref{eq:intro1} measures the long-run normalised entropy of the log-wealth and could be seen as a non-linear extension of the {\it Kelly's criterion}, see~\cite{MacThoZie2011,BamVicVic2002,DavLlo2021}. It should be noted that the extension~\eqref{eq:intro1} is in fact unique within the class of cash-additive and strongly time-consistent certainty equivalents which explains why the usage of entropy is so common in multiple stochastic control applications, see~\cite{KupSch2009,CheMas2004} for details. In fact, the risk sensitive criterion appears naturally in many portfolio investment problems and is linked to various optimality frameworks. For completeness, let us provide some examples. First, by considering the second order Taylor's expansion of the entropic utility, around $\gamma=0$, we get 
\[
\mu^{\gamma}(\ln W_t)= \bE[\ln W_t] +\tfrac{\gamma}{2}\textrm{Var}[\ln W_t] +O(\gamma_2,t),\quad t>0,
\]
which shows that, for $\gamma<0$, the risk sensitive framework might be considered as an extension of the mean-variance Markowitz portfolio optimisation that allows time-consistent utility treatment, see~\cite{BiePli2003}. Second, \eqref{eq:intro1} is directly linked to so called {\it equivalent safe rate}, which reflects the minimal hypothetical safe rate that would encourage the investor to invest in the risky portfolio, see~\cite{GuaTolWan2019}. Third, for the risk-averse case $\gamma <0$, the risk-sensitive criterion is dual to the {\it downside risk}, which is a common investment criterion in the long-run portfolio optimisation, see \cite{Nag2012} or \cite{Pha2015} for details. Fourth, for $\gamma > 0$, the maximization of \eqref{eq:intro1} is related to the studies of {\it power utility asymptotics} and can be considered as a dual problem to {\it upside chance probability}, see  \cite{Pha2015} and \cite{Ste2011b}. Finally, let us remark that risk sensitive criterion is an {\it acceptability indicex} (also called {\it performance measure}) and has many economically desirable properties, see e.g.~\cite{CheMad2009,BieCiaDraKar2016}. We refer to~\cite{BiePli2003} for an overview of economic properties of risk sensitive criterion made in reference to portfolio management.

The main aim of this paper is to show that under the i.i.d. property imposed on asset's log-returns one can solve a suitable risk-sensitive Bellman equation under proportional transaction cost and lack of short selling; see  \cite{DacNor1990,CziSch2016} for a discussion about transaction cost impact on portfolio management. We emphasize that the set of additional assumptions imposed on log-returns in this paper is minimal, i.e. we only require that asset's log-returns have finite mean and entropy. While this might be counter-intuitive on the first sight, as one typically impose strong ergodic assumption on the process in order to get the existence of risk-sensitive Bellman equation solution, the i.i.d. property proves to be a plausible alternative. For an overview of the key results, we refer to Theorem~\ref{th:main1}, Theorem~\ref{th:main2}, and Theorem~\ref{th:main3}.

The results of this paper are presented in a self-contained entropy based way to streamline the economic context; we hope this makes the paper more transparent and accessible to the generic mathematical finance community. That saying, the results presented here are in fact linked to an extensive literature on the risk sensitive stochastic control optimisation, and are expanding this framework in reference to portfolio management, see \cite{BiePli1999,PitSte2016}. That is why in some cases we decided to present alternative formulations of the Bellman equations, to link them more directly with the  (controlled) {\it Multiplicative Poisson Equation} framework.

Our work is also linked to a variety of problems studied for Markov decision processes (see e.g. \cite{BauJas2018}) and recently studied continuous time risk sensitive problems with regime switching over finite time horizon, see \cite{BoLiaYu2019, Hat2018, DasGosRan2018}. In particular, we want to mention that the proof techniques presented in this paper are based on some novel ideas applied to {\it vanishing-discount} and {\it span-contraction} approaches, cf. \cite{CavHer2017,SheStaObe2013}. For instance, we were able to weaken the typical assumption imposed on the negative value of the log-process, by replacing Schwarz's inequality based approximation with a tail-based argument in one of the key steps of local contraction property proof, see Proposition~\ref{eq:span.contr.bounded}. Also, by incorportaing Arzela-Ascoli theorem into the vanishing discount approach, we were able to show the existence of a regular Bellman solution.

Apart from theoretical results, we present two numerical examples. They might be interesting to a reader who is not familiar with the risk sensitive stochastic control but wants to better understand why the study of Bellman equation could improve trading performance even in a very simplistic case. In particular, by using simple approximation schemes, one can directly recover no-action strategies that are important aspect of portfolio management in the presence of transaction costs, see \cite{CziSch2016}. This shows why the development of efficient risk sensitive approximation algorithms in the dynamic context can help to develop or benchmark trading strategies; see \cite{FeiYanWan2021,BasBhaBor2008,Bor2010,AraBisPra2021} where practical aspects linked to risk sensitive policy iteration algorithms are studied.

This paper is organized as follows. In Section~\ref{S:introduction}, we provide the general setup, state the assumptions, and formulate suitable Bellman equations. Next, in Section~\ref{S:discounted} we focus on the discounted version of the problem, that paves the ground for the usage of the vanishing discount approach. In Section~\ref{S:vanishing} we follow the vanishing discount approach in order to show the key results of this paper. Then, in Section~\ref{S:span-contraction} we switch to the span-contraction approach in order to strengthen the results presented in Theorem~\ref{th:main2} and show how to utilise the local contraction property in the i.i.d. setting. Finally, in Section~\ref{S:examples} we present numerical examples.

\section{Problem formulation}\label{S:introduction}

Let $(\Omega, F,(F_t)_{t\in\bT},\bP)$ be a discrete-time filtered probability space, where $\bT=\bN$. Let $d\in\bN$ denote the number of available risky assets and let 
$S(t):=(S_1(t),\ldots,S_d(t))$ denote the positive vector price process, where $S_j(t)$ denotes the price of the $j$th risky asset at time $t\in\bT$. For a given trading strategy, we use $N(t)=(N_1(t),\ldots,N_d(t))$ to denote the portfolio asset volume vector at time $t$ after the portfolio is rebalanced, i.e. $N_i(t)$ denotes how much asset $S_i$ we hold in our portfolio at time $t$ after the rebalancing is executed. Also, we use
\begin{equation}\label{eq:Wt}
W(t-):= \langle N(t-),S(t)\rangle\quad\textrm{and}\quad W(t):=\langle N(t),S(t)\rangle,
\end{equation}
where $\langle\cdot,\cdot\rangle$ is the standard scalar product, to denote portfolio wealth process at time $t$ before and after the rebalancing, respectively. Throughout the paper we assume absence of short selling and follow the proportional transaction cost framework. This is partly encoded in the self-financing condition that is given by
\begin{equation}\label{eq:self-financing}
W(t)=W(t-)-d( (N(t)-N(t-1))\cdot S(t)),
\end{equation}
where $(\cdot)$ denotes vector point-wise product, function $d\colon \bR^d\to\bR_{+}$ is the proportional transaction cost penalty given by
\[
d(x):=\langle c, [x]^+\rangle+ \langle h, [x]^-\rangle,\quad x\in\bR^d,
\]
for a fixed cost rates $c,h\in \bR^d$ such that $0<c_j, h_j<1$, $j=1,2,\ldots,d$, and $[x]^\pm$ denotes (component-wise) positive/negative part of $x$. To ease the notation, we introduce portfolio loading factors (portions of the capital invested in the assets, weights) vectors given by
\begin{equation}\label{eq:pi}
\pi(t):=\frac{N(t)\cdot S(t)}{W(t)}\quad\textrm{and}\quad \pi(t-):=\frac{N(t-1)\cdot S(t)}{W(t-)}.
\end{equation}
Note that due to the absence of short selling, for any $t\in\bT$ and $\omega\in\Omega$, we have $\pi(t)(\omega),\pi(t-)(\omega)\in {\cal S}$, where ${\cal{S}}:=\left\{x\in \bR^d: x_j\geq 0; \langle x,1\rangle=1\right\}$. Before we introduce the objective function, let us present a short technical lemma which shows that capital decay $W(t)/W(t-)$ can be expressed as a function of factor loadings.

\begin{lemma}\label{lemma:WtoPi}
There is $\tilde s\in (0,1)$ and continuous function $s: {\cal S}^2\mapsto [\tilde s,1]$ such that
\[
\frac{W(t)}{W(t-)}=s(\pi(t-),\pi(t)),\quad t\in\bT.
\]
\end{lemma}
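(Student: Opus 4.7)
The plan is to eliminate $W(t)$ and $W(t-)$ from the self‑financing identity, reducing everything to a one‑dimensional implicit equation for the capital decay ratio $r:=W(t)/W(t-)$, and then to solve it using monotonicity and the intermediate value theorem. Using the definitions in \eqref{eq:Wt}--\eqref{eq:pi} we have $N_j(t-1)S_j(t)=\pi_j(t-)W(t-)$ and $N_j(t)S_j(t)=\pi_j(t)W(t)=r\,\pi_j(t)\,W(t-)$. Substituting these into \eqref{eq:self-financing} and dividing by $W(t-)>0$ yields the scalar identity
\[
r \;=\; 1 \;-\; \sum_{j=1}^d c_j\bigl[r\,\pi_j(t)-\pi_j(t-)\bigr]^{+} \;-\; \sum_{j=1}^d h_j\bigl[r\,\pi_j(t)-\pi_j(t-)\bigr]^{-}.
\]
Thus it suffices to show that, for every pair $(p,q)\in\cS^2$, the function
\[
F(r,p,q)\;:=\;r-1+\sum_{j=1}^d c_j[rq_j-p_j]^{+}+\sum_{j=1}^d h_j[rq_j-p_j]^{-}
\]
has a unique root $s(p,q)\in(0,1]$, that this root is bounded below by some $\tilde s>0$ uniformly in $(p,q)$, and that $s$ is continuous.

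Existence and uniqueness I would settle in one stroke. The map $F$ is jointly continuous in $(r,p,q)$ and piecewise affine in $r$; its one‑sided derivatives in $r$ equal $1+\sum_j c_jq_j\mathbf 1_{rq_j>p_j}-\sum_j h_jq_j\mathbf 1_{rq_j<p_j}$, which is bounded below by $1-\max_j h_j>0$ because $\sum_j q_j=1$ and $h_j<1$. So $r\mapsto F(r,p,q)$ is strictly increasing. At the endpoints I compute $F(0,p,q)=-1+\sum_j h_jp_j\le -1+\max_j h_j<0$ and $F(1,p,q)=\sum_j c_j[q_j-p_j]^{+}+\sum_j h_j[q_j-p_j]^{-}\ge 0$. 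The intermediate value theorem together with strict monotonicity gives a unique root $s(p,q)\in[0,1]$.

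The next, and most delicate, step is the uniform positive lower bound $\tilde s$. At the root $r=s(p,q)$ I use the bounds $\sum_j[rq_j-p_j]^{+}\le\sum_j rq_j=r$ and $\sum_j[rq_j-p_j]^{-}\le\sum_j p_j=1$ in the defining identity to obtain $r\ge 1-c^\star r-h^\star$ with $c^\star:=\max_j c_j$ and $h^\star:=\max_j h_j$, whence
\[
s(p,q)\;\ge\;\tilde s\;:=\;\frac{1-h^\star}{1+c^\star}\;\in\;(0,1),
\]
the last inclusion because $0<c_j,h_j<1$ for every $j$. This is the main obstacle of the proof in the sense that it is where the hypothesis $c_j,h_j\in(0,1)$ is crucially used; any weaker bound on $\sum_j[rq_j-p_j]^{\pm}$ would fail to guarantee $\tilde s>0$ independently of $(p,q)$.

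Finally, continuity of $s\colon\cS^2\to[\tilde s,1]$ is a soft consequence of what has already been done: given $(p_n,q_n)\to(p,q)$ in $\cS^2$, the values $s(p_n,q_n)$ lie in the compact set $[\tilde s,1]$, so along any subsequence they admit a further convergent subsequence with some limit $r^\star$; joint continuity of $F$ forces $F(r^\star,p,q)=0$, and uniqueness of the root yields $r^\star=s(p,q)$, so the whole sequence converges. This establishes the existence of the continuous function $s$ with the stated range $[\tilde s,1]$ and completes the argument.
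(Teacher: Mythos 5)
Your proposal is correct and follows essentially the same route as the paper: reduce the self-financing identity to a scalar equation $F(r,\pi(t-),\pi(t))=0$ that is strictly increasing and piecewise affine in $r$, get existence and uniqueness of the root from monotonicity and the intermediate value theorem, and deduce continuity of $s$ from compactness of $[\tilde s,1]$ plus uniqueness of the root along subsequences. If anything, you are more explicit than the paper, which merely asserts the uniform lower bound $\tilde s$ and the strict monotonicity of $F$, whereas you derive the one-sided slope bound $1-\max_j h_j>0$ and the concrete constant $\tilde s=(1-\max_j h_j)/(1+\max_j c_j)$.
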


\begin{proof}
Let $F\colon \bR_+\times {\cal S}^2\to \bR$ be a function given by $F(w,x,y):=w + d (wy-x)$ and let $\tilde e:=\min_{j=1,\ldots,d}h_j$. Noting that $F$ is continuous, strictly increasing in $w$, $\tilde e \leq F(0,x,y)\leq \max_{j=1,\ldots,d}h_j<1$, and $F(1,x,y)>1$, we know that there exists function $s\colon {\cal S}^2\to [\tilde s,1]$ such that $F\left(s(\pi(t-),\pi(t)),\pi(t-),\pi(t)\right)=1$. On the other hand, using self-financing condition \eqref{eq:self-financing} we get
\[
 F\left(\tfrac{W(t)}{W(t-)},\pi(t-),\pi(t)\right)=\tfrac{W(t)}{W(t-)} + d \left(\tfrac{W(t)}{W(t-)}\pi(t)-\pi(t-)\right)=1.
\]
Since $F$ is strictly increasing with respect to $w$ we know that $s(\pi(t-),\pi(t))=\tfrac{W(t)}{W(t-)}$. It remains to show that $s$ is continuous. Let ${\cal S}\ni\pi_n, \pi_n'$ be such that $\pi_n\to \pi$ and $\pi_n'\to \pi'$, as $n\to \infty$. Recalling that $F$ is continuous, strictly increasing in its first argument and satisfies $F(s(\pi_n,\pi_n'),\pi_n,\pi_n')=1$ as well as $F(s(\pi,\pi'),\pi,\pi')=1$, we conclude that for any subsequence $(n_k)_{k\in\bN}$ such that $s(\pi_{n_k},\pi_{n_k}')\to \bar{s}$, for some $\bar{s} \in [\tilde s,1]$, we get $\bar{s}=s(\pi,\pi')$. Since the same limit $s(\pi,\pi')$ is achieved for any subsequence $(n_k)$, we get continuity of $s$.
\end{proof}
From Lemma \ref{lemma:WtoPi} we see that the trading strategy could be represented via the loading factors \eqref{eq:pi}. For any given ${\cal{S}}$-valued (adapted) strategy $\pi$ we use $W_{\pi}$ to denote the corresponding wealth process. 

The main goal of this paper is to find strategy $\pi$ that maximizes long run risk sensitive objective function,  applied to log-wealth process. Namely, we fix a risk-sensitivity parameter $\gamma\in \bR\setminus \{0\}$ and consider the objective function given by
\begin{align}
J(\pi) &:=\liminf_{n\to \infty}\frac{1}{n}\frac{1}{\gamma}\bE\left[ W_{\pi}(n-)^{\gamma}\right]\nonumber\\
& =\liminf_{n\to \infty}\frac{\mu^{\gamma}(\ln W_{\pi}(n-))}{n}\nonumber\\
& =\liminf_{n\to \infty}\frac{1}{n}\mu^{\gamma}\left(\sum_{t=0}^{n-1}\ln \frac{W_{\pi}((t+1)-)}{W_{\pi}(t-)}\right),\label{eq:obj1}
\end{align}
where $\mu^{\gamma}(X):=\frac{1}{\gamma}\bE[e^{\gamma X}]$ is the entropic utility function; for consistency, we also use limit notation $\mu^{0}(X):=\bE[X]$. Note that \eqref{eq:obj1} is measuring time averaged entropy of portfolio's log-return; see \cite{BiePli2003} for the economical context.

Since we are interested in optimising portfolio's log-growth, throughout this paper we assume that the assets log-return vector $r(t)=(r_i(t))_{i=1}^{d}$, where $r_i(t):=\ln \tfrac{S_i(t)}{S_i(t-1)}$, is an i.i.d. vector satisfying conditions
\begin{equation}\label{A.1} \tag{A.1}
\mu^{\gamma}\left( r_i(t)\right)\in\bR\quad\textrm{and}\quad \bE\left[ r_i(t)\right]\in\bR\quad\textrm{for } i=1,2,\ldots,d,
\end{equation}
which means that log-returns are integrable and have finite entropy for the prefixed risk sensitive parameter $\gamma\in \bR\setminus\{0\}$.

\begin{remark}
From assumption \eqref{A.1}, using monotonicity of entropic risk with respect to risk-sensitivity parameter, we get that for any $\delta$ between $\gamma$ and $0$, we have $\mu^{\delta}(r_i(t))\in\bR$. Also, note that assumption \eqref{A.1} could be rephrased using non-entropy notation as $\bE[e^{\gamma r_i(t)}]\in \bR$ and $\bE[r_i(t)]\in\bR$, which could be linked to log-returns moment generating function finiteness.
\end{remark}
For transparency, we also introduce an asset relative shift process $w(t)=(w_i(t))_{i=1}^{d}$ given by
\[
w(t):=e^{r(t)}=\left(\frac{S_1(t)}{S_1(t-1)},\ldots,\frac{S_d(t)}{S_d(t-1)}\right).
\]
Let us now show how to re-express inner part of $J(\pi)$ as a $\pi$-controlled process. Let $G\colon \bR^d\times \bR^d\to\bR^d$ be given by $G(x,y):=x\cdot y/\langle x,y\rangle$. Noting that
\[
\pi(t-)=\frac{\pi(t-1)\cdot w(t)}{\langle \pi(t-1), w(t)\rangle}=G(\pi(t-1),w(t)),
\]
and rewriting the objective criterion \eqref{eq:obj1} as
\begin{align}
J(\pi)& =\liminf_{n\to \infty}\frac{1}{n}\mu^{\gamma}\left(\sum_{t=0}^{n-1}\ln \frac{W_{\pi}((t+1)-)}{W_{\pi}(t)}\frac{W_{\pi}(t)}{W_{\pi}(t-)}\right)\nonumber\\
& =\liminf_{n\to \infty}\frac{1}{n}\mu^{\gamma}\left(\sum_{t=0}^{n-1}\ln \frac{\langle N_{\pi}(t),S(t+1)\rangle}{W_{\pi}(t)}+\ln(s(\pi(t-),\pi(t)))\right)\nonumber\\
& =\liminf_{n\to \infty}\frac{1}{n}\mu^{\gamma}\left(\sum_{t=0}^{n-1}\ln \frac{\langle N_{\pi}(t),w(t+1)\cdot S(t)\rangle}{W_{\pi}(t)}+\ln(s(\pi(t-),\pi(t)))\right)\nonumber\\
& =\liminf_{n\to \infty}\frac{1}{n}\mu^{\gamma}\left(\sum_{t=0}^{n-1} \ln \langle \pi(t), w(t+1)\rangle+\ln(s(G(\pi(t-1),w(t)),\pi(t)))\right), \label{eq:obj2}
\end{align}
we  essentially get direct restatement of the controlled log-wealth process using control process $\pi$ and independent shifts $w$. 

In order to solve \eqref{eq:obj2}, we introduce the associated Bellman equation. Ideally, given $\gamma\in \bR_{*}:=\bR\setminus\{0\}$, we are looking for a function $v: {\cal S}\to \bR$ and a constant $\lambda\in\bR$ which satisfy equation
\begin{equation}\label{eq:Bellman0}
\lambda+v(\pi)=\sup_{\pi'\in \mathcal S}\left[\mu^{\gamma} \Big( \ln\langle \pi', w(1)\rangle+\ln s(\pi,\pi')+v(G(\pi',w(1)))\Big)\right],
\end{equation}
for $\pi\in \cS$. Note that, with a slight abuse of notation, in \eqref{eq:Bellman0} we use $\pi,\pi'\in \mathcal S$ in reference to a deterministic pre-rebalancing and post-rebalancing weights rather than the whole rebalancing strategy; this convention is often used in the paper when optimality equations are considered.

For technical reasons, instead of considering \eqref{eq:Bellman0} directly, in this paper we consider its slightly modified version given by 
\begin{equation}\label{eq:Bellman0.2}
\lambda+v(\pi,\gamma)=\gamma\sup_{\pi'\in \mathcal S}\left[ \mu^{\gamma} \Big( \ln\langle \pi', w(1)\rangle+\ln s(\pi,\pi')+\gamma^{-1}v(G(\pi',w(1)),\gamma)\Big)\right].
\end{equation}
where $v: {\cal S}\times\bR_{*}\to \bR$ and $\lambda\in\bR$. In particular, we use $v(\pi,\gamma)$ instead of $v(\pi)$  to emphasize the dependency between risk-parameter choice and Bellman's equation solution, and to embed \eqref{eq:Bellman0} into vanishing discount framework. Note that if $v(\cdot,\gamma)$ and $\lambda$ solves \eqref{eq:Bellman0.2}, then $\gamma^{-1} v(\cdot,\gamma)$ and $\gamma^{-1}\lambda$ solves \eqref{eq:Bellman0}. Also, Bellman's equation \eqref{eq:Bellman0.2} could be directly restated in a more classical form. Namely, for the risk-averse case $\gamma<0$ we can rephrase \eqref{eq:Bellman0.2} as
\begin{equation}\label{eq:Bellman1}
e^{v(\pi,\gamma)}=\inf_{\pi'\in\mathcal S}e^{\gamma \ln s(\pi,\pi')}\bE \left[e^{\gamma \left(\ln(\langle \pi', w(1)\rangle)-\lambda\right)+v(G(\pi',w(1)),\gamma)}\right].
\end{equation}
while for the risk-seeking case $\gamma>0$ we get
\begin{equation}\label{eq:Bellman1b}
e^{v(\pi,\gamma)}=\sup_{\pi'\in\mathcal S}e^{\gamma \ln s(\pi,\pi')}\bE \left[e^{\gamma \left(\ln(\langle \pi', w(1)\rangle)-\lambda\right)+v(G(\pi',w(1)),\gamma)}\right].
\end{equation}
For transparency, if not stated otherwise, we use $v$ in reference to equation \eqref{eq:Bellman0.2}.

In the following sections, we study the existence of the solution to the initial problem and its link to Bellman's equation \eqref{eq:Bellman0}. For risk-averse case $\gamma<0$, we will show that under general assumption \eqref{A.1} and can solve the recursive version of \eqref{eq:Bellman0} in order to get the optimal constant and optimal strategy. Moreover, by imposing additional condition on $w$ that is related to mixing, one can show that \eqref{eq:Bellman0} could be directly solved without relying on the recursive scheme. On the other hand, for risk-seeking case $\gamma>0$ assumption \eqref{A.1} alone imply existence of solution to \eqref{eq:Bellman0}. 

For completeness, we will also show that selectors to the Bellman equation determine an optimal strategy in both cases. As already said, the results will be obtained using both vanishing discount approach as well as span-contraction approach.


\section{Discounted problem}\label{S:discounted}
Before we apply the vanishing discount approach, let us provide a few remarks for the discounted version of problem \eqref{eq:obj1}. Consider $\alpha>0$ and the discounted risk sensitive objective problem given by
\[
\sup_{\pi}\tilde J_{\alpha}(\pi) =\sup_{\pi} \mu^{\gamma}\left(\sum_{t=0}^\infty e^{-\alpha t}\left[\ln (\langle \pi(t), w(t+1)\rangle)+\ln(s(\pi(t-),\pi(t))\right]\right).
\]
The associated discounted analogue of Bellman equation \eqref{eq:Bellman0.2} is given by
\begin{equation}\label{eq:Bellman1.d}
v_{\alpha}(\pi,\gamma)=\gamma\sup_{\pi'\in \mathcal S}\left[ \mu^{\gamma} \Big( \ln\langle \pi', w(1)\rangle+\ln s(\pi,\pi')+\gamma^{-1}v_{\alpha}(G(\pi',w(1)),\gamma e^{-\alpha})\Big)\right].
\end{equation}
Note that \eqref{eq:Bellman1.d} is in fact linked to a series of equations which effectively should provide the formula for $(v_{\alpha}(\pi,\gamma e^{-n\alpha}))_{n\in\bN}$. Nevertheless, for simplicity, we are looking for a stronger condition, i.e. a function $v_\alpha$ that satisfies \eqref{eq:Bellman1.d} for any value of risk sensitive parameter between $\gamma$ and 0. In other words, we want \eqref{eq:Bellman1.d} to hold on $\cS\times \Gamma$, for
\[
\Gamma:=[\gamma_-,\gamma_+]\setminus\{0\},
\]
where $\gamma_{-}:=\min\{0,\gamma\}$ and $\gamma_{+}:=\max\{0,\gamma\}$. Nevertheless, with slight abuse of notation, if no ambiguity arise, we often use $\gamma$ to denote a generic choice from $\Gamma$. 

As before, Equation \eqref{eq:Bellman1.d} could be rephrased in a classical way, i.e. for $\gamma<0$ we can restate \eqref{eq:Bellman1.d} as
\[
e^{v_{\alpha}(\pi,\gamma)}=\inf_{\pi'\in\mathcal S}e^{\gamma \ln s(\pi,\pi')}\bE \left[e^{\gamma\ln(\langle \pi', w(1)\rangle)+v_{\alpha}(G(\pi',w(1)),\gamma e^{-\alpha})}\right]
\]
while for $\gamma>0$ we can rewrite  \eqref{eq:Bellman1.d} as
\[
e^{v_{\alpha}(\pi,\gamma)}=\sup_{\pi'\in\mathcal S}e^{\gamma \ln s(\pi,\pi')}\bE \left[e^{\gamma\ln(\langle \pi', w(1)\rangle)+v_{\alpha}(G(\pi',w(1)),\gamma e^{-\alpha})}\right].
\]
Let us now introduce a lemma that will be helpful for establishing existence of solutions to Bellman's equation \eqref{eq:Bellman0.2}. For brevity, we introduce supplementary notation
\begin{align*}
z_{\gamma}(\pi',\pi)&:=\mu^{\gamma}\left(\ln \langle\pi,w(1)\rangle +\ln s(\pi',\pi)\right),\\
z^{-} &:=-\left|\min_i\mu^{\gamma}(r_i(1))\right|-d\max_i\bE|r_i(1)|-\tfrac{\ln d}{|\gamma|}+\ln \tilde s,\\
z^{+} &:=\phantom{-}\left|\max_i\mu^{\gamma}(r_i(1))\right|+d\max_i\bE|r_i(1)|+\tfrac{\ln d}{|\gamma|}.
\end{align*}
Note that $z^{-}$ and $z^{+}$ are finite due to \eqref{A.1}.
\begin{lemma}\label{lm:z.bounds}
Let $\gamma\in\bR\setminus\{0\}$ and let us assume \eqref{A.1}. Then, for any $\delta \in [\gamma_{-},\gamma_{+}]$ and $\pi,\pi'\in\cS$ we have $z^{-}\leq z_{\delta}(\pi',\pi)\leq z^{+}$.
\end{lemma}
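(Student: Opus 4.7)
The plan is to first exploit the translation invariance of the entropic utility: since $\ln s(\pi',\pi)$ is deterministic (it depends only on the fixed vectors $\pi,\pi'\in\mathcal S$ and not on $w(1)$), one has $z_\delta(\pi',\pi)=\ln s(\pi',\pi)+\mu^\delta(\ln\langle \pi,w(1)\rangle)$. Combined with Lemma~\ref{lemma:WtoPi}, which yields $\ln s(\pi',\pi)\in[\ln\tilde s,0]$, this reduces matters to bounding $\mu^\delta(\ln\langle \pi,w(1)\rangle)$ uniformly in $\pi\in\mathcal S$ and $\delta\in[\gamma_-,\gamma_+]\setminus\{0\}$, and it accounts for the $\ln\tilde s$ contribution in $z^-$.

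Since $\pi$ is a probability vector, one has the pointwise estimates $\min_i r_i(1)\le\ln\langle\pi,w(1)\rangle\le\max_i r_i(1)$, so by monotonicity of $\mu^\delta$ in its argument it suffices to bound $\mu^\delta(\max_i r_i(1))$ from above and $\mu^\delta(\min_i r_i(1))$ from below.

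Next I would invoke the remark following~\eqref{A.1}: monotonicity of $\mu^\delta$ in $\delta$ together with $\mu^0=\bE$ places $\mu^\delta(X)$ between $\bE[X]$ and $\mu^\gamma(X)$. The $\bE$-side contributes the term $d\max_i\bE|r_i(1)|$ via the crude bound $|\max_i r_i(1)|,|\min_i r_i(1)|\le\sum_i|r_i(1)|$. The $\mu^\gamma$-side uses $\max_i e^{\gamma r_i(1)}\le\sum_i e^{\gamma r_i(1)}$: for $\gamma>0$ this directly gives $\mu^\gamma(\max_i r_i(1))\le\tfrac{\ln d}{\gamma}+\max_i\mu^\gamma(r_i(1))$, while for $\gamma<0$ the identity $e^{\gamma\min_i r_i(1)}=\max_i e^{\gamma r_i(1)}$ combined with the sign flip from dividing by $\gamma$ yields $\mu^\gamma(\min_i r_i(1))\ge-\tfrac{\ln d}{|\gamma|}+\min_i\mu^\gamma(r_i(1))$.

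Assembling the pieces and taking absolute values to merge the cases $\gamma>0$ and $\gamma<0$ recovers exactly the constants $z^-$ and $z^+$. The main bookkeeping obstacle is to keep track of the sign of $\delta$ (and of $\gamma$) when dividing inside the entropic formula, since the direction of several inequalities then flips; splitting the argument into the two cases $\gamma>0$ and $\gamma<0$ keeps things transparent and explains why the terms $|\max_i\mu^\gamma(r_i(1))|$ and $|\min_i\mu^\gamma(r_i(1))|$ appear in $z^+$ and $z^-$ respectively.
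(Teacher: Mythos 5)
Your proposal is correct and follows essentially the same route as the paper's proof: translation invariance to isolate $\ln s(\pi',\pi)\in[\ln\tilde s,0]$, the pointwise bounds $\min_i r_i(1)\le\ln\langle\pi,w(1)\rangle\le\max_i r_i(1)$, monotonicity in the risk parameter to sandwich $\mu^{\delta}$ between $\bE$ and $\mu^{\gamma}$, and the estimate $\max_i e^{\gamma r_i(1)}\le\sum_i e^{\gamma r_i(1)}$ yielding the $\tfrac{\ln d}{|\gamma|}$ term, split into the cases $\gamma<0$ and $\gamma>0$ exactly as in the paper. The only cosmetic difference is organizational: the paper runs the two inequality chains directly within each sign case, while you first reduce to bounding $\mu^{\delta}(\max_i r_i(1))$ and $\mu^{\delta}(\min_i r_i(1))$ and then merge the cases by absolute values, which is the same argument.
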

\begin{proof}
First let us consider the case where $\gamma<0$ and $\delta\in [\gamma,0]$. Using monotonicity and translation invariance of entropic utility we get
\begin{align}
z_{\delta}(\pi',\pi) &\geq \mu^{\delta}\left(\ln \langle \pi,w(1)\rangle)\right)+\ln \tilde s\nonumber\\
&\textstyle\geq \mu^{\gamma}\left(\min_i r_i(1)\right)+\ln \tilde s\nonumber\\
 &\textstyle =\tfrac{1}{\gamma}\ln\bE\left[\max_i e^{\gamma  r_i(1)}\right]+\ln \tilde s\nonumber\\
&\textstyle\geq \tfrac{1}{\gamma}\ln\left(\sum_{i=1}^{d}\bE\left[e^{\gamma  r_i(1)}\right]\right)+\ln \tilde s\nonumber\\
&\textstyle\geq \min_i \frac{1}{\gamma}\ln \left[d\cdot \bE\left[e^{\gamma  r_i(1)}\right]\right)+\ln \tilde s\nonumber\\
&\textstyle \geq -\left|\min_i\mu^{\gamma}(r_i(1))\right|-\frac{\ln d}{|\gamma|}+ \ln \tilde s\label{eq:z.minus}
\end{align}
and
\begin{align}
\textstyle z_{\delta}(\pi',\pi)& \textstyle\leq \mu^{\delta}\left(\ln \langle \pi,w(1)\rangle)\right)+\ln 1 \nonumber\\
& \textstyle\leq \mu^{0}\left(\max_i r_i(1)\right)\nonumber\\
&\textstyle\leq d\max_i\bE|r_i(1)| \label{eq:z.plus}.
\end{align}
Now, for $\gamma>0$ and $\delta\in [0,\gamma]$, using similar calculations, we get
\begin{align}
\textstyle z_{\delta}(\pi',\pi)& \textstyle\geq -d\max_i\bE|r_i(1)|+\ln \tilde s, \nonumber\\
\textstyle z_{\delta}(\pi',\pi) &\textstyle\leq \left|\max_i\mu^{\gamma}(r_i(1))\right|+\frac{\ln d}{|\gamma|}.\label{eq:z.all}
\end{align}
Combining \eqref{eq:z.minus}, \eqref{eq:z.plus}, and \eqref{eq:z.all} we conclude the proof.

\end{proof}
We are now ready to present the main theorem of this section.

\begin{theorem}\label{th:discounted}
Let $\gamma\in\bR\setminus\{0\}$ and let us assume \eqref{A.1}. Then, for each $\alpha>0$, there exists a continuous and bounded function $v_{\alpha}\colon \cS\times \Gamma\to \bR$, that is a solution to discounted Bellman's equation \eqref{eq:Bellman1.d}.
\end{theorem}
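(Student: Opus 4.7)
The plan is to build $v_\alpha$ as the uniform limit of the Picard iterates of the Bellman operator $T_\alpha$ defined by the right-hand side of \eqref{eq:Bellman1.d}, exploiting the key structural feature that on the right-hand side $v_\alpha$ is evaluated at $\gamma e^{-\alpha}$ rather than at $\gamma$. This ``shift'' of the risk parameter supplies the geometric decay that $T_\alpha$ itself does not provide, since the operator is only non-expansive in the sup norm on $C_b(\cS\times\Gamma)$.

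I would first rewrite the operator in the unified form
\[
(T_\alpha v)(\pi,\gamma) = \mathrm{ext}_{\pi'\in\cS}\,\ln \bE\!\left[e^{\gamma \ln \langle\pi',w(1)\rangle + \gamma \ln s(\pi,\pi') + v(G(\pi',w(1)),\,\gamma e^{-\alpha})}\right],
\]
where $\mathrm{ext}$ denotes $\sup$ when $\gamma>0$ and $\inf$ when $\gamma<0$. A standard monotonicity argument on the exponential (replace $v$ by $\tilde v$ plus a uniform bound on the difference, apply monotonicity of $\bE$ and of $\mathrm{ext}$) then yields the pointwise non-expansion
\[
|(T_\alpha v - T_\alpha \tilde v)(\pi,\gamma)| \le \sup_{\pi''\in\cS}|v(\pi'',\gamma e^{-\alpha}) - \tilde v(\pi'',\gamma e^{-\alpha})|,
\]
valid on all of $\cS\times\Gamma$; the crucial point is that the right-hand side is evaluated at the shifted parameter $\gamma e^{-\alpha}$ rather than at $\gamma$.

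Starting from $v^{(0)}\equiv 0$ and setting $v^{(n+1)}=T_\alpha v^{(n)}$, Lemma~\ref{lm:z.bounds} bounds the first iterate by $|v^{(1)}(\pi,\delta)|\le |\delta|\,M$ for every $\delta\in\Gamma$, where $M:=\max(|z^-|,|z^+|)$. Iterating the non-expansion $n$ times and evaluating at $\delta=\gamma e^{-n\alpha}$ then gives
\[
\sup_{(\pi,\gamma)\in\cS\times\Gamma}|v^{(n+1)}(\pi,\gamma)-v^{(n)}(\pi,\gamma)| \le |\gamma|\,M\,e^{-n\alpha},
\]
so $(v^{(n)})$ is Cauchy in the sup norm with geometric rate $e^{-\alpha}$, and the limit $v_\alpha$ is bounded with $\|v_\alpha\|_\infty\le |\gamma|M/(1-e^{-\alpha})$.

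Continuity is propagated inductively. Given continuous bounded $v$, the integrand $\langle\pi',w(1)\rangle^\gamma s(\pi,\pi')^\gamma e^{v(G(\pi',w(1)),\gamma e^{-\alpha})}$ is a.s.\ continuous in $(\pi,\gamma,\pi')$ by Lemma~\ref{lemma:WtoPi} and continuity of $G$; on any compact subneighborhood of a point $(\pi_0,\gamma_0)\in\cS\times\Gamma$ it is dominated by an integrable function of $w(1)$, using $\langle\pi',w(1)\rangle^\gamma\le \sum_i w_i(1)^{\gamma_+}+\sum_i w_i(1)^{\gamma_-}$ which is integrable by \eqref{A.1} and the Remark following it. Dominated convergence yields continuity of the expectation, and Berge's maximum theorem (applied to the compact $\cS$) yields continuity of $\mathrm{ext}_{\pi'}$; hence every $v^{(n)}$, and the uniform limit $v_\alpha$, lie in $C_b(\cS\times\Gamma)$. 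Passing to the limit in $v^{(n+1)}=T_\alpha v^{(n)}$ via the non-expansion gives $T_\alpha v_\alpha=v_\alpha$, i.e.\ \eqref{eq:Bellman1.d}. The main obstacle I expect is precisely this continuity-propagation step, where the dominator must be chosen uniformly in a neighborhood of the target $\gamma$, and this is where the moment condition in \eqref{A.1} is used in an essential way.
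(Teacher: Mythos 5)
Your proof is correct, and while it shares the paper's overall skeleton---Picard iteration of the Bellman operator started at $v\equiv 0$, a geometric Cauchy estimate in the sup norm on $\cS\times\Gamma$, propagation of boundedness and continuity (the $C$-Feller property) via Lemma~\ref{lm:z.bounds}, dominated convergence and compactness of $\cS$, and passage to the limit to identify the fixed point---the mechanism behind the key Cauchy estimate is genuinely different. The paper unfolds the iterates explicitly: $T^n0(\pi,\delta)$ is identified with the value of a finite-horizon discounted risk-sensitive control problem, $\delta\sup_\pi\mu^\delta\big(\sum_{t=0}^{n-1}e^{-t\alpha}Z_t(\pi)\big)$, and the difference $T^{n+k}0-T^n0$ is bounded by estimating the discounted tail $\sum_{t=n}^{n+k-1}e^{-t\alpha}(\cdot)$ using Lemma~\ref{lm:z.bounds} together with the additivity of entropic utility for independent shifts; this explicit representation is also what justifies the formula for $v_\alpha$ recorded in the remark after the theorem. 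You instead prove an abstract pointwise non-expansion inequality whose right-hand side is evaluated at the shifted parameter $\gamma e^{-\alpha}$, so the increments $D_n(\delta):=\sup_\pi|v^{(n+1)}(\pi,\delta)-v^{(n)}(\pi,\delta)|$ satisfy $D_{n+1}(\delta)\le D_n(\delta e^{-\alpha})$, and the geometric decay comes from the fact that the first-iterate bound from Lemma~\ref{lm:z.bounds} is proportional to $|\delta|$, giving $D_n(\delta)\le D_0(\delta e^{-n\alpha})\le|\gamma|Me^{-n\alpha}$; this chain is valid since $\delta e^{-n\alpha}\in\Gamma$ whenever $\delta\in\Gamma$. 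Both arguments exploit the same structural feature (the discount is encoded entirely in the risk-parameter shift), but yours avoids the explicit control-theoretic representation and the independence/additivity argument, making it more elementary and purely operator-theoretic, at the cost of not producing the paper's explicit limit formula for $v_\alpha$. Your continuity step (a.s.\ continuity of the integrand, local domination by an integrable majorant built from $e^{\gamma_\pm r_i(1)}$ via \eqref{A.1}, then the maximum theorem over the compact $\cS$) is essentially the paper's dominated-convergence argument and is sound.
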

\begin{proof}
For brevity we only show the proof for $\gamma<0$; the proof for $\gamma>0$ is analogous. Fix $\alpha>0$ and consider the Bellman operator linked to \eqref{eq:Bellman1.d} that is given by
\[
T v(\pi,\gamma):=\gamma\sup_{\pi'\in\cS}\left[\mu^{\gamma}\left(\ln\langle\pi',w(1)\rangle+\ln s(\pi,\pi')+\gamma^{-1}v(G(\pi',w(1)),\gamma e^{-\alpha})\right)\right],
\]
for any $v\colon \cS\times \Gamma \to \bR$. First, let us show that $T$ is $C$-Feller, i.e.  it transforms bounded and continuous functions into themselves. Using Lemma~\ref{lm:z.bounds} we immediately get
\[
\|Tv\|_{\textrm{sup}}\leq |\gamma|\cdot [z^+-z^{-}]+\|v\|_{\textrm{sup}},
\]
where $\|\cdot\|$ denotes the standard supremum norm, which shows that boundedness is preserved. Now, let us show that continuity is also preserved for continuous bounded functions. First, since $v$ is continuous, for any  sequence $((\pi'_n,\gamma_n))_{n\in\bN}$, where $\pi'_n\in \cS$ and $\gamma_n\in\Gamma$, satisfying $(\pi'_n,\gamma_n)\to(\pi',\gamma)$, we get
\[
e^{\gamma_n \ln\langle\pi_n',w(1)\rangle+v(G(\pi_n',w(1)),\gamma_n e^{-\alpha}))}\stackrel{a.s.}\longrightarrow e^{\gamma \ln\langle\pi',w(1)\rangle+v(G(\pi',w(1)),\gamma e^{-\alpha}))}.
\]
Now, using similar reasoning as in Lemma~\ref{lm:z.bounds} we know that 
\[
e^{\gamma_n \ln\langle\pi_n',w(1)\rangle+v(G(\pi_n',w(1)),\gamma_n e^{-\alpha}))}\leq e^{\gamma \min_i r_i(1)+\|v\|_{\textrm{sup}}}.
\]
Thus, noting that $e^{\gamma \min_i r_i(1)+\|v\|_{\textrm{sup}}}\in L^1$ due to \eqref{A.1}, and using dominated convergence theorem, we get
\[
\bE\left[e^{\gamma_n \ln\langle\pi_n',w(1)\rangle+v(G(\pi_n',w(1)),\gamma_n e^{-\alpha}))}\right] \longrightarrow \bE\left[e^{\gamma \ln\langle\pi',w(1)\rangle+v(G(\pi',w(1)),\gamma e^{-\alpha}))}\right],
\]
which in turn implies continuity of the mapping
\begin{equation}\label{eq:cont.1}
(\pi,\pi',\gamma)\to \gamma\left[\mu^{\gamma}\left(\ln\langle\pi',w(1)\rangle+\ln s(\pi,\pi')+\gamma^{-1}v(G(\pi',w(1)),\gamma e^{-\alpha})\right)\right].
\end{equation}
Now, noting that $\cS$ is compact, we get continuity of $(\pi,\gamma)\to Tv(\pi,\gamma)$. This concludes the proof of the $C$-Feller property. 

Now, we show that for $v\equiv 0$, the iterated sequence of operators satisfies the Cauchy condition. For any $n\in\bN$ and $\delta\in\Gamma$, we get
\[
T^{n}0(\pi(1),\delta)=\delta\sup_{\pi}\mu^{\delta}\Bigg(\sum_{t=0}^{n-1}e^{-t\alpha}\left[\ln\langle\pi(t),w(t)\rangle+\ln s(\pi(t-),\pi(t))\right]\Bigg).
\]
For brevity, and with slight abuse of notation, let us introduce the abbreviated notation $Z_{t}(\pi):=\ln\langle\pi(t),w(t)\rangle+\ln s(\pi(t-),\pi(t))$. For any $n,k\in\bN$ and $\delta\in\Gamma$ (i.e. $\gamma<\delta<0$), noting that entropic risk is additive for independent random variables and doing similar calculations as in Lemma~\ref{lm:z.bounds}, we get
\begin{align}
T^{n+k}0(\pi(1),\delta) &=\delta\sup_{\pi}\mu^{\delta}\Bigg(\sum_{t=0}^{n+k-1}e^{-t\alpha}Z_t(\pi)\Bigg)\nonumber\\
& \leq \delta\sup_{\pi}\mu^{\delta}\Bigg(\sum_{t=0}^{n-1}e^{-t\alpha}Z_t(\pi)+\sum_{t=n}^{n+k-1}e^{-t\alpha}\left( \min_i r_i(t)+\ln \tilde s\right)\Bigg)\nonumber\\
& =T^{n}0(\pi(1),\delta)+\delta\sum_{t=n}^{n+k-1}\mu^{\delta}\left(e^{-t\alpha}\left(\min_i r_i(t)+\ln \tilde s\right)\right)\nonumber\\
& \leq T^{n}0(\pi(1),\delta)+\delta\sum_{t=n}^{n+k-1}e^{-t\alpha}\left(\mu^{\gamma}\left(\min_i r_i(t)\right)+\ln \tilde s\right)\nonumber\\
& \leq T^{n}0(\pi(1),\delta)+\gamma z^{-} \frac{e^{-n\alpha}}{1-e^{-\alpha}}.\label{eq:T0n1}
\end{align}
Similarly, we get
\begin{align}
T^{n+k}0(\pi(1),\delta) &=\delta\sup_{\pi}\mu^{\delta}\Bigg(\sum_{t=0}^{n+k-1}e^{-t\alpha}Z_t(\pi)\Bigg)\nonumber\\
& \geq \delta\sup_{\pi}\mu^{\delta}\Bigg(\sum_{t=0}^{n-1}e^{-t\alpha}Z_t(\pi)+\sum_{t=n}^{n+k-1}e^{-t\alpha}\left( \max_i r_i(t)+\ln 1\right)\Bigg)\nonumber\\
& =T^{n}0(\pi(1),\delta)+\delta\sum_{t=n}^{n+k-1}\mu^{\delta}\left(e^{-t\alpha}\max_i r_i(t)\right)\nonumber\\
& \geq T^{n}0(\pi(1),\delta)+\delta\sum_{t=n}^{n+k-1}e^{-t\alpha}\mu^{0}\left(\max_i r_i(t)\right)\nonumber\\
& \geq T^{n}0(\pi(1),\delta)+\gamma z^{+} \frac{e^{-n\alpha}}{1-e^{-\alpha}}.\label{eq:T0n2}
\end{align}
Consequently, combining \eqref{eq:T0n1} and \eqref{eq:T0n2}, for any $n,k\in\bN$ we get
\[
\sup_{\pi\in\cS}\sup_{\delta\in \Gamma}\left|T^{n+k}0(\pi,\delta)-T^{n}0(\pi,\delta)\right|\leq (z^{+}-z^{-}) \frac{|\gamma|, e^{-n\alpha}}{1-e^{-\alpha}},
\]
which shows that the sequence of functions $(T^n0)_{n\in\bN}$ satisfies the Cauchy condition. Now, note that for any $n\in\bN$, function $T^n0$ is continuous and bounded due to Feller property. Consequently, as the space of bounded and continuous functions on $\cS\times \Gamma$ is a Banach space, we know that there exists a bounded and continuous function $v_{\alpha}\colon \cS\times\Gamma\to\bR$ such that
\begin{equation}\label{eq:Tn.limit}
\sup_{\pi\in\cS}\sup_{\delta\in \Gamma}\left|T^{n}0(\pi,\delta)-v_{\alpha}(\pi,\delta)\right|\to 0\quad \textrm{ as } n\to\infty.
\end{equation}
Now, noting that $T^{n+1}0\equiv T(T^n0)$, we get that $T^{n+1}0(\pi,\delta)\to v_{\alpha}(\pi,\gamma)$ as well as  $T^{n+1}0(\pi,\delta)\to Tv_{\alpha}(\pi,\gamma)$, which shows that $v_{\alpha}$ is a fixed point of operator $T$. 
\end{proof}

\begin{remark}
Note that in the proof of Theorem~\ref{th:discounted} we have in fact showed a more direct formula for $v_{\alpha}$ as due to \eqref{eq:Tn.limit} we know that there exists a limit of $T^n0$, as $n\to\infty$, and we have
\[
v_{\alpha}(\pi,\gamma)=\lim_{n\to\infty}\left[\gamma\sup_{\pi}\mu^{\gamma}\Bigg(\sum_{t=0}^{n}e^{-t\alpha}\left[\ln\langle\pi(t),w(t)\rangle+\ln s(\pi(t-),\pi(t))\right]\Bigg)\right].
\]
\end{remark}

\noindent In the end of this section let us show a supplementary result linked to $v_{\alpha}$.
\begin{lemma}\label{lem:v.alpha.span}
Let $\gamma\in \bR\setminus\{0\}$, $\alpha\in (0,1)$, and let $v_{\alpha}$ solve \eqref{eq:Bellman1.d}. Then, for any $\pi,\bar\pi \in\cS$ and $\delta\in\Gamma$ we have
\begin{equation}\label{eq:v.alpha.bounds}
\inf_{\pi'}\delta\ln\left({s(\pi,\pi') \over s(\bar{\pi},\pi')}\right)\leq v_\alpha(\pi,\delta)-v_\alpha(\bar{\pi},\delta)\leq \sup_{\pi'}\delta \ln\left({s(\pi,\pi') \over s(\bar{\pi},\pi')}\right).
\end{equation}
\end{lemma}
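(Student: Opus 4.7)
The core idea is to exploit the cash-additivity of the entropic utility $\mu^\delta$ together with the fact that in the Bellman equation \eqref{eq:Bellman1.d} the pre-rebalancing weight $\pi$ enters only through the deterministic term $\ln s(\pi,\pi')$. Since $\ln s(\pi,\pi')$ is $\cF_0$-measurable, translation invariance of $\mu^\delta$ yields
\[
v_\alpha(\pi,\delta)=\delta\sup_{\pi'\in\cS}\Bigl[\ln s(\pi,\pi')+H(\pi',\delta)\Bigr],
\]
where
\[
H(\pi',\delta):=\mu^\delta\Bigl(\ln\langle\pi',w(1)\rangle+\delta^{-1}v_\alpha(G(\pi',w(1)),\delta e^{-\alpha})\Bigr)
\]
does not depend on $\pi$.

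Next, I would establish the upper bound in \eqref{eq:v.alpha.bounds}. Fix $\pi,\bar\pi\in\cS$ and $\eps>0$. Choose $\pi'_\eps$ which is $\eps$-optimal in the representation of $v_\alpha(\pi,\delta)$: for $\delta>0$ pick $\pi'_\eps$ with $\sup_{\pi'}[\cdots]\leq \ln s(\pi,\pi'_\eps)+H(\pi'_\eps,\delta)+\eps/|\delta|$, and for $\delta<0$ pick $\pi'_\eps$ which is $\eps$-optimal for the inf that arises after absorbing the negative $\delta$. Comparing with the trivial bound $v_\alpha(\bar\pi,\delta)\geq \delta[\ln s(\bar\pi,\pi'_\eps)+H(\pi'_\eps,\delta)]$ (respectively $\leq$ for the inf), the $H$-terms cancel and I obtain
\[
v_\alpha(\pi,\delta)-v_\alpha(\bar\pi,\delta)\leq \delta\ln\frac{s(\pi,\pi'_\eps)}{s(\bar\pi,\pi'_\eps)}+\eps\leq \sup_{\pi'}\delta\ln\frac{s(\pi,\pi')}{s(\bar\pi,\pi')}+\eps.
\]
Letting $\eps\downarrow 0$ gives the right-hand inequality in \eqref{eq:v.alpha.bounds}. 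The lower bound then follows by applying the same argument with the roles of $\pi$ and $\bar\pi$ exchanged and using $-\sup_{\pi'}\delta\ln\tfrac{s(\bar\pi,\pi')}{s(\pi,\pi')}=\inf_{\pi'}\delta\ln\tfrac{s(\pi,\pi')}{s(\bar\pi,\pi')}$.

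The only delicate point is bookkeeping of the sign of $\delta$, since for $\delta<0$ the outer $\delta\sup$ becomes an infimum and $\eps$-optimality has to be interpreted accordingly; however, because the cancellation of $H(\pi'_\eps,\delta)$ is exact and the comparison term $\delta\ln(s(\pi,\pi')/s(\bar\pi,\pi'))$ already carries $\delta$, the same display works uniformly in $\delta\in\Gamma$. Compactness of $\cS$ and continuity of $v_\alpha$ (from Theorem~\ref{th:discounted}) and of $s$ (from Lemma~\ref{lemma:WtoPi}) guarantee that the $\eps$-optimizers exist, so no additional regularity is required.
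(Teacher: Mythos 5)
Your proof is correct in substance and rests on the same observation as the paper's own proof: by cash-additivity of $\mu^{\delta}$, the pre-rebalancing weight $\pi$ enters \eqref{eq:Bellman1.d} only through the deterministic term $\ln s(\pi,\pi')$, while the remaining term $H(\pi',\delta)$ is independent of $\pi$ and cancels when two states are compared. One slip to fix in the write-up: for $\delta<0$, choosing $\pi'_{\varepsilon}$ to be $\varepsilon$-optimal for the infimum representing $v_\alpha(\pi,\delta)$ reverses \emph{both} comparison inequalities, so what you actually obtain is the lower bound $v_\alpha(\pi,\delta)-v_\alpha(\bar\pi,\delta)\geq \inf_{\pi'}\delta\ln\left(s(\pi,\pi')/s(\bar\pi,\pi')\right)-\varepsilon$, not the displayed upper bound; the claim that ``the same display works uniformly in $\delta\in\Gamma$'' is therefore not literally true. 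This is harmless, because your final symmetry step (exchanging the roles of $\pi$ and $\bar\pi$) supplies the complementary inequality in each case, so both bounds are established for every $\delta\in\Gamma$ --- you only need to relabel which bound is obtained directly according to the sign of $\delta$. It is also worth noting that your $\varepsilon$-optimizer argument proves the lemma exactly as stated, with the bound $\sup_{\pi'}\delta\ln\left(s(\pi,\pi')/s(\bar\pi,\pi')\right)$, whereas the paper's two displayed inequalities, read literally, only yield the weaker estimate $\sup_{\pi'}\delta\ln s(\pi,\pi')-\inf_{\pi'}\delta\ln s(\bar\pi,\pi')$ (which does, however, suffice for the later uses in Lemma~\ref{lm:lambda.bound} and Theorem~\ref{th:main1}); in this respect your execution is the more careful one.
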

\begin{proof}
The proof of \eqref{eq:v.alpha.bounds} follows directly from \eqref{eq:Bellman1.d}. Indeed, using \eqref{eq:Bellman1.d} it is easy to note that for any $\pi\in\cS$, we get
\begin{align*}
v_{\alpha}(\pi,\delta)& \leq \delta\sup_{\pi'\in \mathcal S}\left[ \mu^{\delta} \Big( \ln\langle \pi', w(1)\rangle+\delta^{-1}v_{\alpha}(G(\pi',w(1)),\delta e^{-\alpha})\Big)\right]+\sup_{\pi'\in\cS}\delta\left[\ln s(\pi,\pi')\right],\\
v_{\alpha}(\pi,\delta)& \geq \delta\sup_{\pi'\in \mathcal S}\left[ \mu^{\delta} \Big( \ln\langle \pi', w(1)\rangle+\delta^{-1}v_{\alpha}(G(\pi',w(1)),\delta e^{-\alpha})\Big)\right]+\inf_{\pi'\in\cS}\delta\left[\ln s(\pi,\pi')\right].
\end{align*}
from which \eqref{eq:v.alpha.bounds} follows.
\end{proof}

\section{Vanishing discount approach}\label{S:vanishing}
Fix $\hat{\pi}\in {\cal S}$ and for any $\alpha\in(0,1)$ and $n\in\bN$ define
\begin{align}
\bar{v}_\alpha(\pi,\gamma) & :=v_\alpha(\pi,\gamma)-v_\alpha(\hat{\pi},\gamma),\label{eq:bar.v.alpha}\\
\lambda^{(n)}_{\alpha}& :=v_\alpha(\hat{\pi},\gamma e^{-\alpha n})-v_\alpha(\hat{\pi},\gamma e^{-\alpha (n+1)}).\label{eq:lambda.n.alpha}\\
v^{(n)}_{\alpha}(\pi,\gamma) & :=\bar v_{\alpha}(\pi,\gamma e^{-\alpha n}).
\end{align}
where $v_{\alpha}$ is a solution to the discounted Bellman equation \eqref{eq:Bellman1.d}. First, let us show that sequence introduced in \eqref{eq:lambda.n.alpha} is uniformly bounded.

\begin{lemma}\label{lm:lambda.bound}
Let $\gamma\in\bR\setminus\{0\}$ and let us assume \eqref{A.1}. Then,
\[
\sup_{\alpha\in (0,1)}\sup_{n\in\bN}|\lambda_{\alpha}^{(n)}| <\infty.
\]
\end{lemma}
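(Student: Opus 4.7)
The plan is to exploit the Bellman equation \eqref{eq:Bellman1.d} at the specific parameter $\delta=\delta_n:=\gamma e^{-\alpha n}$, observing that the recursive term on the right-hand side is automatically evaluated at $\delta_n e^{-\alpha}=\delta_{n+1}$, and then isolate the difference $\lambda_\alpha^{(n)}=v_\alpha(\hat\pi,\delta_n)-v_\alpha(\hat\pi,\delta_{n+1})$ by peeling off the constant $v_\alpha(\hat\pi,\delta_{n+1})$ via cash-additivity of $\mu^{\delta_n}$. Concretely, plugging $v_\alpha(G(\pi',w(1)),\delta_{n+1})=\bar v_\alpha(G(\pi',w(1)),\delta_{n+1})+v_\alpha(\hat\pi,\delta_{n+1})$ into \eqref{eq:Bellman1.d} and using cash-additivity (the constant $\delta_n^{-1}v_\alpha(\hat\pi,\delta_{n+1})$ passes through $\mu^{\delta_n}$, then the external $\delta_n$ multiplier cancels the $\delta_n^{-1}$) gives the identity
\[
\lambda_\alpha^{(n)}=\delta_n\sup_{\pi'\in\cS}\mu^{\delta_n}\!\Big(\ln\langle\pi',w(1)\rangle+\ln s(\hat\pi,\pi')+\delta_n^{-1}\bar v_\alpha(G(\pi',w(1)),\delta_{n+1})\Big).
\]

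Next, we bound the right-hand side uniformly in $\alpha\in(0,1)$ and $n\in\bN$. Two ingredients suffice. First, Lemma~\ref{lem:v.alpha.span} yields the span bound $\|\bar v_\alpha(\cdot,\delta_{n+1})\|_\infty\leq|\delta_{n+1}|\,|\ln\tilde s|$, so the perturbation term satisfies $\|\delta_n^{-1}\bar v_\alpha(\cdot,\delta_{n+1})\|_\infty\leq|\delta_{n+1}/\delta_n|\,|\ln\tilde s|=e^{-\alpha}|\ln\tilde s|\leq|\ln\tilde s|$. Second, Lemma~\ref{lm:z.bounds} places $\mu^{\delta_n}(\ln\langle\pi',w(1)\rangle+\ln s(\hat\pi,\pi'))=z_{\delta_n}(\hat\pi,\pi')$ into $[z^-,z^+]$. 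Since $\mu^{\delta_n}$ is $1$-Lipschitz in the supremum norm (an immediate consequence of monotonicity and cash-additivity), the additive perturbation shifts the entropic risk by at most $|\ln\tilde s|$, so $\sup_{\pi'}\mu^{\delta_n}(\cdots)\in[z^--|\ln\tilde s|,z^++|\ln\tilde s|]$. Multiplying by $|\delta_n|\leq|\gamma|$ then yields the uniform estimate $|\lambda_\alpha^{(n)}|\leq|\gamma|\bigl(\max(|z^-|,|z^+|)+|\ln\tilde s|\bigr)$, which is the desired claim.

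The only subtle point is the presence of the factor $\delta_n^{-1}$ in the representation of $\lambda_\alpha^{(n)}$, whose modulus blows up as $n\to\infty$; the reason it does not spoil the bound is precisely that Lemma~\ref{lem:v.alpha.span} delivers a span estimate scaling linearly with $|\delta_{n+1}|$, so the ratio $|\delta_{n+1}/\delta_n|=e^{-\alpha}$ remains bounded by $1$ for every $\alpha>0$, eliminating the dependence on both $n$ and $\alpha$. The argument is insensitive to the sign of $\gamma$: for $\gamma>0$ one uses \eqref{eq:Bellman1b} instead of \eqref{eq:Bellman1} and the sign of $\delta_n$ only determines whether $\delta_n\sup$ collapses to a $\sup$ or an $\inf$ on the outside, but the resulting absolute bound on $|\lambda_\alpha^{(n)}|$ is the same.
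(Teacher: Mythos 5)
Your proof is correct and takes essentially the same route as the paper's: both substitute the discounted Bellman equation \eqref{eq:Bellman1.d} at $\delta_n=\gamma e^{-\alpha n}$, cancel the constant $v_\alpha(\hat\pi,\delta_{n+1})$ via cash-additivity of the entropic utility, control the remaining perturbation $\delta_n^{-1}\bar v_\alpha(\cdot,\delta_{n+1})$ through the span bound of Lemma~\ref{lem:v.alpha.span} (yielding the crucial factor $|\delta_{n+1}/\delta_n|=e^{-\alpha}\leq 1$), and conclude with the uniform bounds of Lemma~\ref{lm:z.bounds}. The only difference is organizational: you first isolate an exact identity for $\lambda_\alpha^{(n)}$ and then estimate it, whereas the paper performs the estimates directly inside a chain of inequalities.
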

\begin{proof}
We only show the proof for $\gamma<0$; the proof for $\gamma>0$ is analogous. Let us fix $n\in\bN$ and $\alpha\in (0,1)$. Using Lemma~\ref{lem:v.alpha.span}, for any $n\in\bN$ and  $\pi'\in S$, we get
\begin{equation}\label{eq:Bellman1.ineq2}
|v_{\alpha}(G(\pi',w(1)),\gamma e^{-\alpha (n+1)})-v_{\alpha}(\hat\pi,\gamma e^{-\alpha (n+1)})|  \leq |\gamma|e^{-\alpha n} \cdot |\ln\tilde s|, 
\end{equation}
Consequently, recalling that $v_{\alpha}$ is a solution to the discounted Bellman equation, rewriting $v_\alpha(\hat{\pi},\gamma e^{-\alpha n})$ using \eqref{eq:Bellman1.d}, applying \eqref{eq:Bellman1.ineq2}, and then Lemma~\ref{lm:z.bounds}, we get
\begin{align*}
\lambda_{\alpha}^{(n)} &=v_\alpha(\hat{\pi},\gamma e^{-\alpha n})-v_\alpha(\hat{\pi},\gamma e^{-\alpha (n+1)}) \\
&\leq \gamma e^{-\alpha n}\sup_{\pi'\in \mathcal S}\left[ \mu^{\gamma e^{-\alpha n}} \Big( \ln\langle \pi', w(1)\rangle+\ln s(\hat\pi,\pi')-(\gamma e^{-\alpha n})^{-1}(\gamma e^{-\alpha n}) \cdot |\ln\tilde s|\Big)\right]\\
& \leq \gamma e^{-\alpha n}\left(\sup_{\pi'\in \mathcal S} z_{\gamma e^{-\alpha n}}(\hat\pi,\pi')-|\ln\tilde s|\right)\\
& \leq \gamma \left(z^{-}-|\ln\tilde s|\right).
\end{align*}
Similarly, we get $\lambda_{\alpha}^{(n)} \geq \gamma \left(|\ln\tilde s|+z^{+}\right)$. Noting that both upper and lower bound is independent of $\alpha$ and $n$, we conclude the proof.
\end{proof}
Now, we present two main results of this section, which shows that under \eqref{A.1} one could find a sequence of functions solving the iterated Bellman equation. These functions could be used to find optimal strategy and related optimal value for the problem \eqref{eq:obj1}. While Theorem~\ref{th:main1} is in fact true under both risk-averse ($\gamma<0$) and risk-seeking ($\gamma>0$) case, in the latter case we can show that iteration is not required, i.e. we can directly solve~\eqref{eq:Bellman0.2}; this is stated in Theorem~\ref{th:main2}.

\begin{theorem}\label{th:main1}
Let $\gamma<0$ and let us assume \eqref{A.1}. Then, there exists a sequence of constants $\lambda^{(n)}$, $n\in\bN$, and a sequence of continuous bounded functions $v^{n}(\cdot,\gamma)\colon\cS\to \bR$, $n\in\bN$, such that the recursive Bellman equation
\[
v^{(n)}(\pi,\gamma) +\lambda^{(n)}=\gamma \sup_{\pi'\in \mathcal S}\bigg[ \mu^{\gamma} \Big( \ln\langle \pi', w(1)\rangle+\ln s(\pi,\pi')+\gamma ^{-1}v^{(n+1)}(G(\pi',w(1)),\gamma)\Big)\bigg]
\]
is satisfied for any $n\in\bN$. Moreover, the constant $\Lambda:=\liminf_{n\to\infty}\lambda^{(n)}/ (\gamma n)$ is the optimal value for the problem \eqref{eq:obj1}, i.e. we get $\Lambda=\sup_{\pi}J(\pi)$, and the optimal (iterated) strategy is defined by the selectors to the recursive Bellman equation. \end{theorem}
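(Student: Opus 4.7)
The plan is to obtain the recursive Bellman solution via the vanishing discount method built on Section~\ref{S:discounted}. Starting from the continuous bounded $v_\alpha$ supplied by Theorem~\ref{th:discounted}, I normalize at the reference point $\hat\pi$ to form $\bar v_\alpha$, $v^{(n)}_\alpha$ and $\lambda^{(n)}_\alpha$ as in \eqref{eq:bar.v.alpha}--\eqref{eq:lambda.n.alpha}. The two a priori estimates already at hand do the heavy lifting: Lemma~\ref{lem:v.alpha.span} combined with the uniform continuity of $\ln s$ on the compact set $\cS^2$ (valid because $s\in[\tilde s,1]$ stays away from zero) produces a modulus of continuity for $v^{(n)}_\alpha(\cdot,\gamma)$ that depends only on $|\gamma|$ and not on $\alpha$ or $n$, together with a uniform sup-norm bound of order $|\gamma|\cdot|\ln\tilde s|$. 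Lemma~\ref{lm:lambda.bound} supplies the companion bound $\sup_{\alpha,n}|\lambda^{(n)}_\alpha|<\infty$.

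These estimates let me apply Arzela-Ascoli together with a diagonal extraction over $n\in\bN$, yielding a subsequence $\alpha_k\to 0$, continuous bounded functions $v^{(n)}(\cdot,\gamma)\colon\cS\to\bR$, and constants $\lambda^{(n)}\in\bR$ with $v^{(n)}_{\alpha_k}\to v^{(n)}$ uniformly on $\cS$ and $\lambda^{(n)}_{\alpha_k}\to \lambda^{(n)}$ for every $n$. To identify the limit equation, I write \eqref{eq:Bellman1.d} at the risk parameter $\gamma e^{-\alpha n}$, subtract $v_\alpha(\hat\pi,\gamma e^{-\alpha n})$ from both sides, and use the translation invariance $\mu^\delta(X+c)=c+\mu^\delta(X)$ to obtain the prelimit identity
\begin{equation*}
v^{(n)}_\alpha(\pi,\gamma)+\lambda^{(n)}_\alpha=\gamma e^{-\alpha n}\sup_{\pi'\in\cS}\mu^{\gamma e^{-\alpha n}}\Big(\ln\langle \pi', w(1)\rangle+\ln s(\pi,\pi')+(\gamma e^{-\alpha n})^{-1}v^{(n+1)}_\alpha(G(\pi',w(1)),\gamma)\Big).
\end{equation*}
Passing to $\alpha_k\to 0$ then gives the recursive Bellman equation of the theorem. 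The passage uses dominated convergence inside $\mu^{\gamma e^{-\alpha n}}$ with an integrable envelope of the form $\max\{(\min_i w_i(1))^\gamma,1\}\cdot\tilde s^\gamma\cdot e^{C}$, where $C=\sup_{n,\alpha}\|v^{(n+1)}_\alpha\|_{\mathrm{sup}}$, integrability being granted by \eqref{A.1}; a Berge-type argument (essentially the same $C$-Feller continuity as in Theorem~\ref{th:discounted}) then exchanges the limit with $\sup_{\pi'}$.

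For the optimality claim I would run a verification argument. Measurable selectors $\hat\pi^{(n)}\colon\cS\to\cS$ exist by the Kuratowski--Ryll-Nardzewski theorem applied to the upper semicontinuous objective on the compact set $\cS$, and they define a Markovian strategy. Iterating the recursive identity between $t=0$ and $t=N-1$ along this strategy, and using the tower property of $\mu^\gamma$ under the i.i.d.\ shifts $w(\cdot)$ (together with translation invariance, to telescope the $\lambda^{(n)}$'s out of the nested entropies), one gets
\begin{equation*}
\gamma\mu^{\gamma}\!\left(\sum_{t=0}^{N-1}Z_t(\pi)+\gamma^{-1}v^{(N)}(\pi(N-),\gamma)\right)=v^{(0)}(\pi(0-),\gamma)+\sum_{n=0}^{N-1}\lambda^{(n)},
\end{equation*}
with $Z_t(\pi):=\ln\langle\pi(t),w(t+1)\rangle+\ln s(\pi(t-),\pi(t))$, and the corresponding inequality (in the direction favourable to $\gamma<0$) for any admissible competitor. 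Dividing by $N$ and sending $N\to\infty$ removes the uniformly bounded boundary term and recovers $\sup_\pi J(\pi)$ from the normalized limit of $\sum_{n=0}^{N-1}\lambda^{(n)}$, i.e.\ from the quantity $\Lambda$ in the statement.

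The hard step is the limit passage of the second paragraph. The supremum is over the compact set $\cS$, the risk parameter $\gamma e^{-\alpha n}$ itself moves with $\alpha$, and $\mu^{\delta}$ is nonlinear in both its argument and $\delta$. Exchanging $\lim_{\alpha_k\to 0}$ with $\sup_{\pi'}$ and with the nonlinear $\mu^{\delta}$ relies crucially on the integrable envelope provided by \eqref{A.1} together with the uniform (in $\alpha,n$) equicontinuity of $v^{(n+1)}_\alpha(\cdot,\gamma)$; without the finite-entropy hypothesis neither would hold. A secondary but nontrivial technical point is arranging the selectors measurably and consistently in $(n,\pi)$ so that the iterated verification step produces a genuinely admissible adapted strategy.
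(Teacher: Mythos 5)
Your proposal is correct and follows essentially the same route as the paper's proof: vanishing discount via Theorem~\ref{th:discounted}, normalization at $\hat\pi$, the uniform bounds from Lemma~\ref{lem:v.alpha.span} and Lemma~\ref{lm:lambda.bound}, Arzela--Ascoli with diagonal extraction, passage to the limit in the prelimit identity, and then iteration of the recursive equation with the tower property of $\mu^{\gamma}$ to identify $\Lambda=\sup_{\pi}J(\pi)$. The only difference is that you spell out details the paper leaves implicit (the dominated-convergence envelope and Berge-type exchange of limit and supremum, and the measurable-selector argument), which is a faithful elaboration rather than a different method.
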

\begin{proof}
Fix $\gamma<0$. First, observe that the family of functions $\{\pi\to \bar v_{\alpha}(\pi,\delta)\}$, indexed by $\alpha\in (0,1)$ and $\delta\in\Gamma$ is both uniformly bounded and equicontinuous. Indeed, both uniform boundedness and equicontinuity follows directly from Lemma~\ref{lem:v.alpha.span} since for any $\pi,\tilde\pi \in \cS$ we get
\begin{align*}
\sup_{\alpha\in (0,1)}\sup_{\delta\in \Gamma} |\bar v_{\alpha}(\pi,\delta)|& \leq |\gamma|\cdot \left|\ln \tilde s\right|,\\
\sup_{\alpha\in (0,1)}\sup_{\delta\in \Gamma}|\bar v_{\alpha}(\pi,\delta)-\bar v_{\alpha}(\tilde \pi,\delta)| & <|\gamma| \sup_{\pi'\in\cS}\left|\ln \frac{s(\pi,\pi')}{s(\tilde\pi,\pi')}\right|.
\end{align*}
Thus, by the Arzela-Ascoli theorem, we know that there exists a decreasing sequence $(\alpha_i)_{i\in\bN}$, such that $\alpha_i\in (0,1)$, $\alpha_i \searrow 0$ as $i\to\infty$, and for any $n\in\bN$, we have
\begin{equation}\label{eq:arzela1}
v^{(n)}_{\alpha}(\pi,\gamma) \to v^{(n)}(\pi,\gamma),\quad \textrm{$\pi$-uniformly}, 
\end{equation}
for some continuous and bounded function $v^{(n)}(\cdot,\gamma)\colon \cS\to\bR$. Second, from Lemma~\ref{lm:lambda.bound}, we know that the sequence $(\alpha_i)_{i\in\bN}$, can be chosen in such a way that for any $n\in\bN$ we also have 
\begin{equation}\label{eq:arzel2}
\lambda^{(n)}_{\alpha_i} \to \lambda^{(n)},\quad i\to\infty,
\end{equation}
where $(\lambda^{(n)})_{n\in\bN}$ is some sequence of real numbers. Third, by combining  Bellman equation \eqref{eq:Bellman1.d} with \eqref{eq:bar.v.alpha} and \eqref{eq:lambda.n.alpha}, for any $n\in\bN$, we get
\begin{multline}\label{eq:limit.n.i}
v^{(n)}_{\alpha_i}(\pi,\gamma) +\lambda_{\alpha_i}^{(n)}=\gamma e^{-n\alpha_i}\sup_{\pi'\in \mathcal S}\bigg[ \mu^{\gamma e^{-n\alpha_i}} \Big( \ln\langle \pi', w(1)\rangle+\ln s(\pi,\pi')\\
+(\gamma e^{-n\alpha_i})^{-1}v^{(n+1)}_{\alpha_i}(G(\pi',w(1)),\gamma)\Big)\bigg].
\end{multline}
Now, noting that the limit values are also bounded, for each $n\in\bN$, we can take the limit in \eqref{eq:limit.n.i}, as $i\to\infty$, and get
\begin{equation}\label{eq:iteration}
v^{(n)}(\pi,\gamma) +\lambda^{(n)}=\gamma \sup_{\pi'\in \mathcal S}\bigg[ \mu^{\gamma} \Big( \ln\langle \pi', w(1)\rangle+\ln s(\pi,\pi')+\gamma ^{-1}v^{(n+1)}(G(\pi',w(1)),\gamma)\Big)\bigg],
\end{equation}
which concludes the first part of the proof. Now, iterating the sequence \eqref{eq:iteration}, starting from $n=1$, and using tower property of entropic utility, for any $n\in\bN$, we get
\[
\sum_{i=1}^n \lambda^{(n)}=\gamma \sup_{\pi}\bigg[\mu^{\gamma} \Big( \sum_{t=1}^{n-1} Z_{t}(\pi)+\gamma ^{-1}\left[v^{(n+1)}(G(\pi(n),w(n)),\gamma)-v^{(0)}(\pi(0),\gamma)\right]\Big)\bigg]
\]
where $Z_{t}(\pi):=\ln\langle\pi(t),w(t)\rangle+\ln s(\pi(t-),\pi(t))$. Dividing both sides by $\frac{1}{\gamma n}$, noting that the sequence of functions $(v^{(n)}$ is uniformly bounded by $\pm |\gamma|\cdot |\tilde s|$, and taking the limes inferior of both sides, we get
\[
\Lambda=\sup_{\pi}\liminf_{n\to\infty}\frac{1}{n}\mu^{\gamma} \left( \sum_{t=1}^{n-1} Z_{t}(\pi)\right)=\sup_{\pi}J(\pi),
\]
which concludes the proof. Also, note that for any admissible strategy $\pi$ we get $\Lambda\geq J(\pi)$, while for the strategy $\tilde \pi$ determined by the iterated sequence $(v^{(n)})$ we get $\Lambda= J(\tilde \pi)$.
\end{proof}

\noindent Now, let us show that for $\gamma>0$, the result in Theorem~\ref{th:main1} could be strengthened in a sense that recursive scheme is not required and one can solve directly \eqref{eq:Bellman0.2}.

\begin{theorem}\label{th:main2}
Let $\gamma>0$ and let us assume \eqref{A.1}. Then, there exists a constant $\lambda$ and a continuous bounded functions $v(\cdot,\gamma)\colon\cS\to \bR$ that solves Bellman's equation \eqref{eq:Bellman0.2}, i.e. we get
\[
v(\pi,\gamma) +\lambda=\gamma \sup_{\pi'\in \mathcal S}\bigg[ \mu^{\gamma} \Big( \ln\langle \pi', w(1)\rangle+\ln s(\pi,\pi')+\gamma ^{-1}v(G(\pi',w(1)),\gamma)\Big)\bigg].
\]
Moreover, the constant $\Lambda:=\lambda/ \gamma$
is the optimal value for the problem \eqref{eq:obj1}, i.e. we get $\Lambda=\sup_{\pi}J(\pi)$, and the optimal strategy is defined by the selectors to the Bellman equation.
\end{theorem}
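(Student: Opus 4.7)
The plan is to follow the vanishing discount approach used in the proof of Theorem~\ref{th:main1}, but to show that in the risk-seeking case $\gamma>0$ the recursive scheme collapses to the single Bellman equation \eqref{eq:Bellman0.2}.

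First, starting from the discounted solution $v_\alpha$ provided by Theorem~\ref{th:discounted}, I would introduce the centered objects
\[
\bar v_\alpha(\pi,\delta):=v_\alpha(\pi,\delta)-v_\alpha(\hat\pi,\delta),\qquad \lambda_\alpha := v_\alpha(\hat\pi,\gamma)-v_\alpha(\hat\pi,\gamma e^{-\alpha}),
\]
i.e.\ the $n=0$ versions of \eqref{eq:bar.v.alpha} and \eqref{eq:lambda.n.alpha}. Using the translation invariance of $\mu^\gamma$, the discounted Bellman equation \eqref{eq:Bellman1.d} can be rewritten as
\[
\bar v_\alpha(\pi,\gamma)+\lambda_\alpha = \gamma\sup_{\pi'\in\cS}\mu^\gamma\Big(\ln\langle\pi',w(1)\rangle+\ln s(\pi,\pi')+\gamma^{-1}\bar v_\alpha(G(\pi',w(1)),\gamma e^{-\alpha})\Big).
\]
Lemma~\ref{lem:v.alpha.span} delivers uniform boundedness and equicontinuity in $\pi$ of the family $\{\bar v_\alpha(\cdot,\delta)\}_{\alpha\in(0,1),\,\delta\in\Gamma}$, while Lemma~\ref{lm:lambda.bound} gives a uniform bound on $\lambda_\alpha$. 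By Arzelà--Ascoli combined with a diagonal extraction, I would obtain a subsequence $\alpha_i\downarrow 0$ along which $\bar v_{\alpha_i}(\cdot,\gamma)\to v(\cdot,\gamma)$ and $\bar v_{\alpha_i}(\cdot,\gamma e^{-\alpha_i})\to \tilde v(\cdot,\gamma)$ uniformly in $\pi$, and $\lambda_{\alpha_i}\to\lambda$, for some bounded continuous $v,\tilde v\colon\cS\to\bR$ and some $\lambda\in\bR$.

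The main obstacle is to identify $\tilde v(\cdot,\gamma)=v(\cdot,\gamma)$; this is exactly what distinguishes the risk-seeking case from the recursive statement of Theorem~\ref{th:main1}. The plan here is to exploit the $\sup$-type structure of the multiplicative Bellman equation \eqref{eq:Bellman1b} together with an integrable dominating envelope obtained as in the $C$-Feller step of the proof of Theorem~\ref{th:discounted}: one can then bound $|\bar v_\alpha(\pi,\gamma)-\bar v_\alpha(\pi,\gamma e^{-\alpha})|$ uniformly in $\alpha$ and $\pi$ by a quantity that vanishes with $\alpha$, producing a modulus of continuity in $\delta$ for the family $\{\bar v_\alpha(\cdot,\delta)\}$ that is uniform in $\alpha$. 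Granting $\tilde v=v$, passing to the limit in the centered Bellman identity (justified by dominated convergence with the envelope from Lemma~\ref{lm:z.bounds}) yields \eqref{eq:Bellman0.2}.

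Finally, for the optimality claim $\Lambda=\sup_\pi J(\pi)$, I would iterate \eqref{eq:Bellman0.2} $n$ times using the tower property of $\mu^\gamma$ under the i.i.d.\ shifts $w(t)$ to obtain, with $Z_t(\pi):=\ln\langle\pi(t),w(t+1)\rangle+\ln s(\pi(t-),\pi(t))$,
\[
n\lambda + v(\pi(0-),\gamma) \;\geq\; \gamma\,\mu^\gamma\Big(\sum_{t=0}^{n-1} Z_t(\pi)+\gamma^{-1} v(\pi(n-),\gamma)\Big),
\]
for every admissible strategy $\pi$, with equality when $\pi=\tilde\pi$ is the strategy determined by the Borel selectors of the supremum on the right-hand side of \eqref{eq:Bellman0.2}. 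Dividing by $\gamma n$, using the uniform boundedness of $v$, and sending $n\to\infty$ gives $\Lambda\geq J(\pi)$ for every $\pi$ with $\Lambda=J(\tilde\pi)$, concluding the proof.
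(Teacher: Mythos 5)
There is a genuine gap, and it sits exactly where you flagged the ``main obstacle'': the identification $\tilde v(\cdot,\gamma)=v(\cdot,\gamma)$ is asserted, not proved. Your proposed mechanism --- a dominating envelope as in the $C$-Feller step of Theorem~\ref{th:discounted}, yielding a bound on $|\bar v_\alpha(\pi,\gamma)-\bar v_\alpha(\pi,\gamma e^{-\alpha})|$ that vanishes with $\alpha$ uniformly in $\pi$ and $\alpha$ --- does not follow from anything you cite. The envelope argument gives continuity of $\delta\mapsto v_\alpha(\pi,\delta)$ for \emph{fixed} $\alpha$, but as $\alpha\downarrow 0$ the uncentered functions $v_\alpha$ blow up like $1/\alpha$ (indeed $v_\alpha(\hat\pi,\gamma)-v_\alpha(\hat\pi,\gamma e^{-\alpha})=\lambda_\alpha\to\lambda\neq 0$ in general), so no fixed-$\alpha$ modulus survives the limit; centering removes the constant part, but proving that the remainder vanishes uniformly in $\alpha$ is essentially the theorem itself, restated. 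The decisive symptom that the step is unjustified: nowhere does your argument use $\gamma>0$. If it were correct as written, the identical reasoning would solve \eqref{eq:Bellman0.2} for $\gamma<0$ under \eqref{A.1} alone --- precisely what the paper cannot do, and why Section~\ref{S:span-contraction} introduces the extra mixing assumption \eqref{A.2} and a different (span-contraction) technique for the risk-averse case.

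The paper's actual proof closes this gap with a sign-sensitive convexity argument. It first obtains, exactly as in Theorem~\ref{th:main1}, the recursive family $(v^{(n)},\lambda^{(n)})$. The key observation is that $\delta\mapsto\ln\bE[e^{\delta Z}]$ is convex, and for $\gamma>0$ one has $\delta\sup_{\pi}\mu^{\delta}(\cdot)=\sup_{\pi}\ln\bE[e^{\delta(\cdot)}]$, so $\delta\mapsto v_\alpha(\pi,\delta)$ is a \emph{supremum} of convex functions, hence convex (for $\gamma<0$ it would be an infimum of convex functions, and convexity fails --- this is where the sign enters). Convexity gives the divided-difference inequality \eqref{eq:convex1}, which makes both $(\lambda^{(n)})_n$ and the increments $z^{(n)}:=v^{(n+1)}-v^{(n)}+\lambda^{(n)}$ monotone (up to factors $e^{-\alpha}$ that disappear in the limit); since both are bounded (Lemma~\ref{lm:lambda.bound} and the span bound $|v^{(n)}|\leq|\gamma|\cdot|\ln\tilde s|$), they converge, and a telescoping argument forces $v^{(n+1)}-v^{(n)}\to 0$. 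Only then does Arzel\`a--Ascoli, applied to the sequence $(v^{(n)})_n$ in $n$ (not in $\alpha$), produce a limit $v$ satisfying the stationary equation. Note this is weaker than what you tried to prove: the paper never shows $v^{(0)}=v^{(1)}$, only that consecutive differences vanish asymptotically along $n$. Your final optimality step (iterating \eqref{eq:Bellman0.2} via the tower property, dividing by $\gamma n$) is fine and matches the paper, but it cannot rescue the missing identification step.
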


\begin{proof}
Fix $\gamma>0$. The first part of the proof is analogous to the proof of Theorem~\ref{th:main2}. Applying similar reasoning, we get that there exists sequence of constants $\lambda^{(n)}$ and bounded continuous functions $v^{(n)}(\cdot,\gamma)\colon  \cS\to\bR$, $n\in\bN$ such that 
\begin{equation}\label{eq:iteration.geq}
v^{(n)}(\pi,\gamma) +\lambda^{(n)}=\gamma \sup_{\pi'\in \mathcal S}\bigg[ \mu^{\gamma} \Big( \ln\langle \pi', w(1)\rangle+\ln s(\pi,\pi')+\gamma ^{-1}v^{(n+1)}(G(\pi',w(1)),\gamma)\Big)\bigg],
\end{equation}
Now, let us show that for $\gamma>0$ the sequence $(\lambda^{(n)})_{n\in\bN}$ is non-decreasing. First, note that for any random variable $Z$, the mapping $\gamma\to \ln\bE[e^{\gamma Z}]$ is convex.\footnote{This could be easily shown using H{\"o}lder inequality by considering $U=e^{(1-\theta)\gamma Z}$ and $V=e^{\theta\gamma Z}$, for $\theta\in [0,1]$, with $\frac{1}{p}+\frac{1}{q}=1$, such that $p=\frac{1}{1-\theta}$ and $q=\frac{1}{\theta}$, and taking logarithm of both sides.} Consequently, since supremum of a family of convex functions is convex, we get that for any $\alpha\in (0,1)$ and $\pi\in\cS$ the mapping $\gamma\to v_{\alpha}(\pi,\gamma)$ is also convex. In particular, for any $n\in\bN$, we get
\begin{equation}\label{eq:convex1}
\frac{v_{\alpha}(\pi,\gamma e^{-\alpha n})-v_{\alpha}(\pi,\gamma e^{-\alpha (n+1)})}{\gamma e^{-\alpha n}(1-e^{-\alpha})}\geq \frac{v_{\alpha}(\pi,\gamma e^{-\alpha (n+1)})-v_{\alpha}(\pi,\gamma e^{-\alpha (n+2)})}{\gamma e^{-\alpha (n+1)}(1-e^{-\alpha})},
\end{equation}
which implies $e^{-\alpha}\lambda_{\alpha}^{(n)}\leq \lambda_{\alpha}^{(n+1)}$. Now, letting $i\to\infty$ in the decreasing sequence $(\alpha_i)_{i\in\bN}$ defined in the proof of Theorem~\ref{th:main1}, we get
\[
\lambda^{(n)}=\lim_{i\to\infty}e^{-\alpha_i}\lambda_{\alpha_i}^{(n)} \leq \lim_{i\to\infty}\lambda_{\alpha_i}^{(n+1)}=\lambda^{(n+1)},
\]
which concludes this part of the proof. Second, from \eqref{eq:convex1} we get
\[
e^{-\alpha}(v^{(n+1)}_{\alpha}(\pi,\gamma)-v^{(n)}_{\alpha}(\pi,\gamma) +\lambda^{(n)}_{\alpha}) \leq v^{(n+2)}_{\alpha}(\pi,\gamma)-v^{(n+1)}_{\alpha}(\pi,\gamma )+\lambda^{(n+1)}_{\alpha}.
\]
Again, taking the limit $i\to\infty$, for the decreasing sequence $(\alpha_i)_{i\in\bN}$ defined in the proof of Theorem~\ref{th:main1}, we get
\[
v^{(n+1)}(\pi,\gamma)-v^{(n)}(\pi,\gamma) +\lambda^{(n)} \leq v^{(n+2)}(\pi,\gamma)-v^{(n+1)}(\pi,\gamma )+\lambda^{(n+1)}.
\]
Consequently, the sequence of functions $z^{(n)}(\pi,\gamma):=v^{(n+1)}(\pi,\gamma)-v^{(n)}(\pi,\gamma) +\lambda^{(n)}$ is increasing wrt. $n$. As the sequence $z^{(n)}$ is  equicontinuous and bounded, there exists a continuous bounded function $z(\cdot,\gamma)\colon \cS\to\bR$ such that $z^{(n)}(\cdot,\gamma)\to z(\cdot,\gamma)$, as $n\to\infty$. Now, since $\lambda^{(n)} \nearrow \lambda$ for some $\lambda\in \bR$, as $n\to\infty$, we get
\[
\left[v^{(n+1)}(\pi,\gamma)-v^{(n)}(\pi,\gamma)\right] \to z(\pi,\gamma)-\lambda, \quad n\to\infty.
\]
Now, note that for any $\pi\in\cS$ we get $z(\pi,\gamma)=\lambda$ as otherwise, the sequence $(v^{(n)}(\pi,\gamma))_{n\in\bN}$, that could be represented by
\[
v^{(n+1)}(\pi,\gamma)=\sum_{i=1}^{n} \left[v^{(i+1)}(\pi,\gamma)-v^{(i)}(\pi,\gamma)\right]+v^{(1)}(\pi,\gamma), \quad n\in\bN,
\]
would be unbounded which would lead to contradiction as $|v^{(n)}(\pi,\gamma)|<|\gamma|\cdot |\ln\tilde s|$, for $n\in\bN$. This implies
\begin{equation}\label{eq:limit.zero1}
\lim_{n\to\infty} \left[v^{(n+1)}(\pi,\gamma)-v^{(n)}(\pi,\gamma)\right] \to 0,\quad n\to\infty.
\end{equation}
Due to Arzela-Ascoli theorem, as the mapping $n\to v^{(n)}(\cdot,\gamma)$ is equicontinuous and uniformly bounded, we can choose a subsequence $(n_k)_{k\in\bN}$ and continuous bounded function $v(\cdot,\gamma)\colon \cS\to\bR$ such that
\[
v^{(n_k)}(\pi,\gamma)\to v(\pi,\gamma),\quad \textrm{$\pi$-uniformly}.
\]
Finally, as $\lambda^{(n)} \nearrow \lambda$, $n\to\infty$, recalling \eqref{eq:limit.zero1}, and letting $n\to\infty$ in \eqref{eq:iteration} we get 
\[
v(\pi,\gamma) +\lambda=\gamma \sup_{\pi'\in \mathcal S}\bigg[ \mu^{\gamma} \Big( \ln\langle \pi', w(1)\rangle+\ln s(\pi,\pi')+\gamma ^{-1}v(G(\pi',w(1)),\gamma)\Big)\bigg].
\]
which concludes the proof.
\end{proof}

\section{Span-contraction approach}\label{S:span-contraction}
In Section~\ref{S:vanishing} we have shown that for $\gamma>0$ one can solve directly Bellman equation \eqref{eq:Bellman0.2}; see Theorem~\ref{th:main2}. On the other hand, for $\gamma<0$, we were only able to obtain the recursive scheme as presented in Theorem~\ref{th:main2}. In this section, we show that the solution to \eqref{eq:Bellman0.2} exists also for $\gamma<0$ under relatively weak ergodic assumptions imposed on asset log-returns. We follow the span-contraction approach; see e.g.~\cite{PitSte2016}. In the span-contraction approach the parameter $\gamma<0$ is kept fixed in a sense that we do not need to introduce the discounting scheme. Consequently, to ease the exposition, rather than using notation from \eqref{eq:Bellman0.2}, we revert to the one from \eqref{eq:Bellman0}: we fix one $\gamma<0$ and write $v(\pi)$ rather than $v(\pi,\gamma)$. As usual, we use $B(\cS)$ to denote the space of continuous and bounded functions $v\colon \cS \to\bR$. For any $v\in B(\cS)$ we introduce supremum norm and span semi-norm notation
\[
\|v\|:=\sup_{\pi\in\cS}|v(\pi)|\quad\textrm{and}\quad \|v\|_{\textrm{sp}}:=\sup_{\pi,\pi'\in\cS}\frac{v(\pi)-v(\pi')}{2}.
\]
Note that those norms are bound by relation
\begin{equation}\label{eq:constant}
\inf_{d\in\bR}\|v+d\|=\|v\|_{\textrm{sp}},
\end{equation}
see \cite{PitSte2016} or \cite{HaiMat2011} for details. Now, we introduce additional assumption that relates to ergodicity and plays a central role in the span-contraction approach. Namely, for any $\delta\in (0,1/d)$, we assume that
\begin{equation}\label{A.2} \tag{A.2}
\sup_{A\in \mathcal{B}(S)}\sup_{\pi,\pi'\in \cS_{\delta}}\Big( \bP[G(\pi,w(1))\in A]-\bP[G(\pi',w(1))\in A] \Big)<1,
\end{equation}
where $\cS_{\delta}:=\{\pi\in S\colon \min_{i} \pi_i\geq \delta\}$ identifies a set of strategies in which we allocate at least $\delta$ proportion of capital to each asset. This assumption is related to mixing and states that whatever our initial (non-degenerated) allocation is, we expect to be in some common set with positive probability.

\begin{remark}[Ergodicity/mixing assumption relevance]
Recalling that $G(x,y)=x\cdot y / \langle x,y\rangle$, for any $\pi\in S$ we get $G(\pi,w(1))=\frac{\pi \cdot w(1)}{\langle \pi ,w(1)\rangle}$ which shows that \eqref{A.2} is in fact related to assumptions imposed on $w(1)$. Nevertheless, we decided to present \eqref{A.2} in its classical form, to show the connection to mixing. One can show that assumption \eqref{A.2} is satisfied by any log Levy process, even with ergodic economic factors, see \cite[Proposition 1]{DunDunSte2011} for details. Also, one can notice that if $r(1)$ has full support, then the assumption \eqref{A.2} is automatically satisfied.
\end{remark}
Now, for any $\delta\in (0,1/d)$ we introduce operator
\[
T_{\delta}v(\pi):=\sup_{\pi'\in \mathcal \cS_{\delta}}\left[\gamma \mu^{\gamma} \Big( \ln\langle \pi', w(1)\rangle+\ln s(\pi,\pi')+\gamma^{-1}v(G(\pi',w(1)))\Big)\right],\quad v\in B(\cS).
\]
It is relatively easy to show that operator $T_{\delta}$ is $C$-Feller. In particular, using similar reasoning as in Lemma~\ref{lem:v.alpha.span}, for any $v\in B(\cS)$ we get

\begin{equation}\label{eq:span.contr.bounded}
\|T_{\delta}v\|_{\textrm{sp}}\leq |\gamma| \sup_{\pi,\pi',\tilde\pi\in \cS}\ln\left[\ \frac{s(\pi,\pi')}{s(\tilde\pi,\pi')} \right]\leq -|\gamma| \ln \tilde s:=K,
\end{equation}
which implies boundedness of $T_{\delta}v$, for any $v\in B(\cS)$. Let us now show that $T_{\delta}$ is a local contraction.

\begin{proposition}
Let $\gamma<0$ and let us assume \eqref{A.1} and \eqref{A.2}. Then, for each $\delta\in (0,1\d)$, the operator $T_{\delta}$ is a local contraction under $\|\cdot\|_{\textrm{sp}}$, i.e. there exists $L_{\delta}:\bR_{+}\to (0,1)$ such that
\[
\|T_{\delta}v_1 -T_{\delta}v_2\|_{\textrm{sp}} \leq L_{\delta}(M)\|v_1-v_2\|_{\textrm{sp}},
\]
for $v_1,v_2\in C(\cS)$, such that $\|v_1\|\leq M$ and $\|v_2\|\leq M$.
\end{proposition}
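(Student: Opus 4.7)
The plan is to combine a standard sup/sub-optimality comparison of the controlled operator with a uniform mixing bound on exponentially tilted measures derived from \eqref{A.2}, and a tail estimate relying only on \eqref{A.1}. As a preliminary step, I would isolate the $\pi$-dependence by writing
\[
T_{\delta}v(\pi)=\sup_{\pi'\in\cS_{\delta}}\big[\gamma\ln s(\pi,\pi')+h(v,\pi')\big],\qquad h(v,\pi'):=\ln\bE\big[e^{\gamma\ln\langle\pi',w(1)\rangle+v(G(\pi',w(1)))}\big],
\]
so that $h(v,\pi')$ does not depend on $\pi$. For arbitrary $\pi,\bar\pi\in\cS$, I would pick maximisers $\pi'_{\pi}$ of $T_{\delta}v_1(\pi)$ and $\pi'_{\bar\pi}$ of $T_{\delta}v_2(\bar\pi)$ (which exist by compactness of $\cS_{\delta}$ and continuity of the integrand, argued as in the proof of Theorem~\ref{th:discounted}). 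Using sub-optimality at the swapped pairs, the cross-difference $[T_{\delta}v_1(\pi)-T_{\delta}v_2(\pi)]-[T_{\delta}v_1(\bar\pi)-T_{\delta}v_2(\bar\pi)]$ is bounded above by $[h(v_1,\pi'_{\pi})-h(v_2,\pi'_{\pi})]-[h(v_1,\pi'_{\bar\pi})-h(v_2,\pi'_{\bar\pi})]$, with the $\gamma\ln s(\cdot,\pi')$ terms cancelling within each bracket.

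Next, I would represent each $h$-increment as a tilted expectation: for $\pi'\in\cS_{\delta}$,
\[
h(v_2,\pi')-h(v_1,\pi')=\ln\int_{\cS}e^{\Delta}\,d\nu_{\pi'},\qquad \Delta:=v_2-v_1,
\]
where $\nu_{\pi'}$ is the probability measure on $\cS$ obtained by normalising the density $e^{\gamma\ln\langle\pi',w(1)\rangle+v_1(G(\pi',w(1)))}$ against the distribution of $G(\pi',w(1))$. Because $T_{\delta}$ commutes with additive constants, one may without loss of generality normalise so that $\inf\Delta=0$ and $\sup\Delta=2\rho$, with $\rho:=\|v_1-v_2\|_{\mathrm{sp}}$. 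The problem then reduces to bounding the difference $\ln\int e^{\Delta}d\nu_{\pi'_{\pi}}-\ln\int e^{\Delta}d\nu_{\pi'_{\bar\pi}}$ by $2L_{\delta}(M)\rho$ for some $L_{\delta}(M)<1$.

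The heart of the argument uses \eqref{A.2} together with a tail-based truncation to estimate the oscillation of $\nu_{\pi'}$ in $\pi'\in\cS_{\delta}$. I would split the sample space by the event $E_{\epsilon}=\{\langle\pi',w(1)\rangle\geq\epsilon\}$: on $E_{\epsilon}$, the tilting density $e^{\gamma\ln\langle\pi',w(1)\rangle+v_1(G(\pi',w(1)))}$ is pinched between explicit constants depending only on $M,\epsilon,\gamma$ (using $\|v_1\|\leq M$ and $\gamma<0$), which combined with \eqref{A.2} produces a Dobrushin-type coefficient $\theta_{\delta}(M,\epsilon)>0$ and hence a genuine contraction of $\ln\int e^{\Delta}d\nu_{\pi'}$ in $\pi'$ on that event; on the complementary tail $E_{\epsilon}^{c}$, Markov's inequality applied to the integrable random variable $e^{\gamma\ln\langle\pi',w(1)\rangle}$ -- finite thanks to \eqref{A.1} -- shows that the $\nu_{\pi'}$-mass of the tail can be made uniformly small in $\pi'\in\cS_{\delta}$ by choosing $\epsilon$ small. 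A suitable optimisation over $\epsilon$ then yields the required $L_{\delta}(M)\in(0,1)$.

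The main obstacle will be this last step. The exponential tilting by $v_1$ forces the local character of the contraction: the Dobrushin coefficient on $E_{\epsilon}$ depends unavoidably on the sup-norm bound $M$ and on $\epsilon$, and the two errors must be balanced carefully. Moreover, since no moment assumption beyond \eqref{A.1} is imposed on the negative part of $\ln\langle\pi',w(1)\rangle$, a Cauchy--Schwarz based control of the small-return region is unavailable, and the tail truncation flagged in the Introduction is precisely the substitute that keeps the argument alive under the entropy condition alone.
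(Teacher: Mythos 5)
Your proposal follows the same architecture as the paper's proof: a selector-swap (sub-optimality) comparison that reduces the cross-difference to differences of tilted log-moment terms, a change-of-measure representation of those terms, and a transfer of the mixing bound \eqref{A.2} from the laws of $G(\pi',w(1))$ to the tilted measures via a truncation argument using only \eqref{A.1}. The gap is in the central transfer step. For $\gamma<0$ the tilting density is $\langle\pi',w(1)\rangle^{\gamma}e^{v_1(G(\pi',w(1)))}$, which blows up as $\langle\pi',w(1)\rangle\to 0$ but \emph{vanishes} as $\langle\pi',w(1)\rangle\to\infty$. Hence on your one-sided event $E_{\epsilon}=\{\langle\pi',w(1)\rangle\geq\epsilon\}$ the density is bounded above by $\epsilon^{\gamma}e^{M}$ but has no positive lower bound whenever log-returns are unbounded above, which \eqref{A.1} permits (e.g.\ the Gaussian dynamics of the paper's second example). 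So the claimed pinching ``between explicit constants depending only on $M,\epsilon,\gamma$'' is false, and with only the upper bound you get the implication ``plain mass small $\Rightarrow$ tilted mass small'', which points the wrong way: to push \eqref{A.2} through to a Dobrushin-type coefficient you need ``plain mass $\geq\theta_0/2$ $\Rightarrow$ tilted mass $\geq c(M)>0$'', i.e.\ a density \emph{lower} bound on a set carrying most of the plain mass. The truncation must therefore be applied to the \emph{upper} tail of returns, where the $\gamma<0$ tilting degenerates; this is exactly what the paper does by truncating at a quantile $R(\epsilon')$ of $r_+(1):=\max_i r_i(1)$ in \eqref{eq:loc.contr3b}, with the lower tail entering only through the finiteness of the normalizing constant, $\bE[e^{\gamma r_-(1)}]<\infty$ for $r_-(1):=\min_i r_i(1)$, which is where \eqref{A.1} is actually used.

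Two further points would need repair even after fixing the tail. First, Markov's inequality controls $\bP[\langle\pi',w(1)\rangle<\epsilon]$, but the $\nu_{\pi'}$-mass of that event involves $\bE\bigl[1_{\{\langle\pi',w(1)\rangle<\epsilon\}}\langle\pi',w(1)\rangle^{\gamma}\bigr]$, whose uniform smallness needs the domination $\langle\pi',w(1)\rangle^{\gamma}\leq e^{\gamma r_-(1)}\in L^{1}$ and dominated convergence, not Markov alone (in the corrected argument this piece is in fact not needed at all). Second, your reduction leaves you with a difference of log-integrals $\ln\int e^{\Delta}d\nu-\ln\int e^{\Delta}d\nu'$; a Dobrushin coefficient $\beta<1$ only gives the bound $\ln\bigl(1+\beta(e^{2\rho}-1)\bigr)$, which is \emph{not} $\leq 2\beta\rho$, so linearity in $\rho=\|v_1-v_2\|_{\textrm{sp}}$ does not come for free. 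You must either use the dual representation of the entropic functional to get a bound linear in $v_1-v_2$, namely $\int(v_1-v_2)\,d\bH\leq\|v_1-v_2\|_{\textrm{sp}}\,\|\bH\|_{\textrm{var}}$ (the paper's route), or exploit $\rho\leq 2M$ together with convexity of $\rho\mapsto\ln\bigl(1+\beta(e^{2\rho}-1)\bigr)$ to extract $L_{\delta}(M)=\ln\bigl(1+\beta(e^{4M}-1)\bigr)/(4M)<1$; either way this is an $M$-dependent step your sketch currently omits.
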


\begin{proof}
Let us fix $\delta\in (0,1/d)$. For any $v\in B(\cS)$ and $\pi\in \cS$ let
\[
\bar\mu_{(\pi,v)}(B):=\frac{\bE\left[1_B(G(\pi,w(1))e^{\gamma\ln\langle \pi, w(1)\rangle+v(G(\pi,w(1)))} \right]}{\bE\left[e^{\gamma\ln\langle \pi, w(1)\rangle+v(G(\pi,w(1)))} \right]},\quad B\in \cB(\bR^d)
\]
denote the projection measure for $v$ with rebalancing $\pi$, and let
\[
\pi_{v}:=\arg\max_{\pi\in \cS_{\delta}}\left[\gamma \mu^{\gamma} \Big( \ln\langle \pi', w(1)\rangle+\ln s(\pi,\pi')+\gamma^{-1}v(G(\pi',w(1)))\Big)\right]
\]
denote the optimal rebalancing (induced by operator $T_{\delta}$) given $v\in C(S)$ and initial state $\pi \in \cS$. First, let us show that for any $v_1,v_2\in C(\cS)$ and $\pi,\pi'\in \cS_{\delta}$ we get
\begin{equation}\label{eq:loc.contr0}
T_{\delta}v_1(\pi)-T_{\delta}v_2(\pi)-(T_{\delta}v_1(\pi')-T_{\delta}v_2(\pi'))\leq \|v_1-v_2\|_{\textrm{sp}}\cdot \|\bH_{\pi,\pi'}^{v_1,v_2}\|_{\textrm{var}},
\end{equation}
where $\bH_{\pi,\pi'}^{v_1,v_2}:=\bar\mu_{(\pi_{v_2},v_1)}-\bar\mu_{(\pi'_{v_1},v_2)}$ is a signed measure and $\|\cdot\|_{\textrm{var}}$ is the total variation norm given by
\[
\|\bH \|_{\textrm{var}}:=\int_{\bR^d}|\bH|(d x),
\]
for $|\bH|$ being the total variation of $\bH$; see \cite{PitSte2016} for details. Using dual representation for entropic risk, and performing similar calculations as in  \cite[Lemma 1]{PitSte2016} we get 
\[
T_{\delta}v_1(\pi)-T_{\delta}v_2(\pi)-(T_{\delta}v_1(\pi')-T_{\delta}v_2(\pi'))\leq \int_{\bR^d}\left[v_1(x)-v_2(x) \right]\, \bH_{\pi,\pi'}^{v_1,v_2}(d x).
\]
Now, recalling that for any $v\in B(\cS)$ we have $\inf_{d\in\bR}\|v+d\|=\|v\|_{\textrm{sp}}$, we know there exists $d\in\bR$ such that
\[
 \|v_1-v_2\|_{\textrm{sp}}=\sup_{x\in\bR^d}(v_1(x)-v_2(x)+d)=-\inf_{x\in\bR^d}(v_1(x)-v_2(x)+d)
\]
Consequently, for $A$ denoting the positive set of measure $\bH_{\pi,\pi'}^{v_1,v_2}$, we get
\begin{align*}
\int_{\bR^d}\left[v_1(x)-v_2(x) \right]\, \bH_{\pi,\pi'}^{v_1,v_2}(d x) & = \int_{\bR^d}\left[v_1(x)-v_2(x) +d\right]\, \bH_{\pi,\pi'}^{v_1,v_2}(d x)\\
&\leq \|v_1-v_2\|_{\textrm{sp}}\left(\int_{A} \bH_{\pi,\pi'}^{v_1,v_2}(d x)-\int_{A^c} \bH_{\pi,\pi'}^{v_1,v_2}(d x)\right)\\
& = \|v_1-v_2\|_{\textrm{sp}}\cdot \|\bH_{\pi,\pi'}^{v_1,v_2}\|_{\textrm{var}},
\end{align*}
which concludes this step of the proof. Now, let us fix $M\in\bR_{+}$ and show that there exists constant $L(M)\in (0,1)$ such that for any $v_1,v_2\in C(\cS)$ satisfying $\|v_1\|\leq M$ and $\|v_2\|\leq M$, and $\pi,\pi'\in \cS$, we get
\begin{equation}\label{eq:loc.contr1}
\|\bH_{\pi,\pi'}^{v_1,v_2}\|_{\textrm{var}} \leq 2 L(M).
\end{equation}
Suppose that \eqref{eq:loc.contr1} is not satisfied. Recalling that
\[
\|\bH_{\pi,\pi'}^{v_1,v_2}\|_{\textrm{var}} =2\sup_{B\in \cB(\bR^d)}|\bar\mu_{(\pi_{v_2},v_1)}(B)-\bar\mu_{(\pi'_{v_1},v_2)}(B)|,
\]
we get that there exists a sequence of sets $B_n\in \cB(\bR^d)$, functions $v_n,v_n'\in B(\cS)$ satisfying $\|v_n\|\leq M$ and $\|v'_n\|\leq M$, and weights $\pi_n,\pi_n' \in \cS$ such that
\begin{equation}\label{eq:loc.contr2}
\bar\mu_{(\pi_{v'_n},v_n)}(B_n)\to 1\quad \textrm{and}\quad \bar\mu_{(\pi_{v_n},v'_n)}(B_n)\to 0,\quad \textrm{as } n\to\infty. 
\end{equation}
Now, let $r_+(1):=\max_i r_i(1)$ and $r_-(1):=\min_i r_i(1)$. For any $B\in \cB(\bR^d)$, $\pi\in \cS$ and $v\in B(\cS)$, we get
\begin{align}
\bar\mu_{(\pi,v)}(B) &= \frac{\bE\left[1_B(G(\pi,w(1))e^{\gamma\ln\langle \pi, w(1)\rangle+v(G(\pi,w(1)))} \right]}{\bE\left[e^{\gamma\ln\langle \pi, w(1)\rangle+v(G(\pi,w(1)))} \right]}\nonumber\\
&\geq \frac{\bE\left[1_B(G(\pi,w(1))e^{\gamma\ln\langle \pi, w(1)\rangle+\inf_{\pi'\in\cS}v(\pi)} \right]}{\bE\left[e^{\gamma\ln\langle \pi, w(1)\rangle+\sup_{\pi'\in\cS}v(\pi')} \right]}\nonumber\\
&= e^{-2\|v\|_{\textrm{sp}}}\frac{\bE\left[1_B(G(\pi,w(1)))e^{\gamma\ln\langle \pi, w(1)\rangle} \right]}{\bE\left[e^{\gamma\ln\langle \pi, w(1)\rangle} \right]}\nonumber\\
&\geq e^{-2\|v\|_{\textrm{sp}}}\frac{\bE\left[1_{\{G(\pi,w(1))\in B\}}e^{\gamma r_+(1)} \right]}{\bE\left[e^{\gamma r_-(1)} \right]}.\label{eq:loc.contr3}
\end{align}
Let us now assume there exists $\epsilon\in (0,1)$, such that $\bE[1_B(G(\pi,w(1)))]>\epsilon$ and let $R(\epsilon'):=F^{[-1]}_{r_+(1)}(1-\epsilon')$, where $F^{[-1]}_{r_+(1)}$ is the generalised (lower) quantile function of $r_+$. First, if there exists $\epsilon' \in (0,1)$ such that $\epsilon'\leq \epsilon$ and $\bP[\{r_+(1)\geq  R(\epsilon')\}]\leq \epsilon $,  then we have 
\begin{align}
\bE\left[1_{\{G(\pi,w(1))\in B\}}e^{\gamma r_+(1)}\right] &\geq  \bE\left[1_{\{r_+(1)\geq R(\epsilon')\}}e^{\gamma r_+(1)} \right]\nonumber\\
&\geq  \bE\left[1_{\{R(\epsilon'/2)\geq r_+(1)\geq R(\epsilon')\}}e^{\gamma r_+(1)} \right]\nonumber\\
&\geq \epsilon'/2 \cdot e^{\gamma R(\epsilon'/2)}.\label{eq:loc.contr3b}
\end{align}
Second, let us assume there is no $\epsilon'\in (0,1)$ such that  $\epsilon'\leq \epsilon$ and $\bP[\{r_+(1)\geq  R(\epsilon')\}]\leq \epsilon $. Then, we have $\bP[\{r_+(1)\geq  R(\epsilon)\}]= \bP[\{r_+(1)=  R(\epsilon)\}]$ and consequently, for $\epsilon'=2\epsilon$, we get
\begin{equation}\label{eq:loc.contr3bb}
\bE\left[1_{\{G(\pi,w(1))\in B\}}e^{\gamma r_+(1)}\right]\geq e^{\gamma R(\epsilon)}\bP\left[G(\pi,w(1))\in B\right]\geq \epsilon'/2 \cdot e^{\gamma R(\epsilon'/2)}.
\end{equation}
Combining \eqref{eq:loc.contr3}, \eqref{eq:loc.contr3b} and \eqref{eq:loc.contr3bb}, for any $B\in \cB(\bR^d)$, $\pi\in \cS$, $v\in B(\cS)$, and $\epsilon>0$, such that $\bE[1_B(G(\pi,w(1)))]>\epsilon$, there exists $\epsilon'\in (0,1)$ such that 
\begin{equation}\label{eq:loc.contr3c}
\bar\mu_{(\pi,v)}(B) \geq e^{-2\|v\|_{\textrm{sp}}}\frac{\epsilon'/2 \cdot e^{\gamma R(\epsilon'/2)}}{e^{\gamma \cdot \left[(1/\gamma)\ln \bE\left[e^{\gamma r_-(1)} \right]\right]}} \geq e^{-2\|v\|_{\textrm{sp}}} \frac{\epsilon'/2 \cdot  e^{\gamma R(\epsilon'/2)}}{e^{\gamma \mu^{\gamma}(r_-(1))}}.
\end{equation}
From \eqref{A.1}, both $\mu^{\gamma}(r_-(1))$ and $R(\epsilon'/2)$ are finite for any $\epsilon>0$. Also, note that $R$ is increasing and the choice of $\epsilon'$ in \eqref{eq:loc.contr3b}  depends only on the choice of $\epsilon$, i.e. given condition $\bE[1_B(G(\pi,w(1)))]>\epsilon$, the choice of $\epsilon'$ is independent of the choice of $B\in \cB(\bR^d)$, $\pi\in \cS$ and $v\in B(\cS)$. Consequently, combining \eqref{eq:loc.contr2} with \eqref{eq:loc.contr3c}, recalling that $\|v_n\|\leq M$ and $\|v'_n\|\leq M$, for $n\in\bN$, we get  $\bE\left[1_{B'_n}(G(\pi_{v'_n},w(1))) \right]\to 0$ and $\bE\left[1_{B_n}(G(\pi_{v_n},w(1))) \right]\to 0$, as $n\to\infty$. Therefore, as $n\to\infty$, we have
\begin{equation}\label{eq:loc.contr4}
\bP[G(\pi_{v'_n},w(1))\in B_n] - \bP[G(\pi_{v_n},w(1))\in B_n]\to 1.
\end{equation}
Noting that $\pi_{v_n},\pi_{v'_n} \in \cS_{\delta}$, we get that  \eqref{eq:loc.contr4} contradicts \eqref{A.2}, which concludes the proof of \eqref{eq:loc.contr1}. Combining \eqref{eq:loc.contr0} with \eqref{eq:loc.contr1} we conclude the proof.
\end{proof}
Finally, we are ready to show the main result of this section; note that while in Theorem~\ref{th:main3} we adjusted notation from $v(\cdot,\gamma)$ to $v(\cdot)$, the presented conclusions are consistent with those presented in Theorem~\ref{th:main2} (for $\gamma<0$ instead of $\gamma>0$).

\begin{theorem}\label{th:main3}
Let $\gamma<0$ and let us assume \eqref{A.1} and \eqref{A.2}. Then, there exists a constant $\lambda$ and a continuous bounded functions $v(\cdot,\gamma)\colon\cS\to \bR$ that solves \eqref{eq:Bellman0.2}, i.e. we get
\[
v(\pi) +\lambda=\gamma \sup_{\pi'\in \mathcal S}\bigg[ \mu^{\gamma} \Big( \ln\langle \pi', w(1)\rangle+\ln s(\pi,\pi')+\gamma ^{-1}v(G(\pi',w(1)))\Big)\bigg].
\]
Moreover, the constant $\Lambda:=\lambda/ \gamma$
is the optimal value for the problem \eqref{eq:obj1}, i.e. we get $\Lambda=\sup_{\pi}J(\pi)$, and the optimal strategy is defined by the selectors to the Bellman equation.
\end{theorem}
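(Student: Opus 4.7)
The strategy has three stages: (i) solve the $\delta$-restricted Bellman equation via the span-contraction property for each $\delta\in(0,1/d)$; (ii) let $\delta\downarrow 0$ via Arzel\`a--Ascoli to recover a solution of the unrestricted equation; (iii) verify optimality by iterating the equation along admissible strategies. For stage (i), starting from $v_0\equiv 0$ I define the centered iterates
\[v_{n+1}:=T_\delta v_n-(T_\delta v_n)(\hat\pi),\qquad n\ge 0,\]
so that $v_n(\hat\pi)=0$ for $n\ge 1$. The translation invariance $T_\delta(v+c)=T_\delta v+c$ and \eqref{eq:span.contr.bounded} give $\|v_n\|_{\mathrm{sp}}\le K$, and since $v_n(\hat\pi)=0$ this yields $\|v_n\|\le 2K$ uniformly. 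The local-contraction proposition applied with $M:=2K$ then gives
\[\|v_{n+1}-v_n\|_{\mathrm{sp}}\le L_\delta(2K)\,\|v_n-v_{n-1}\|_{\mathrm{sp}},\]
so $(v_n)$ is Cauchy in the span semi-norm and, because of the centering, Cauchy in supremum norm; it converges uniformly to some $v_\delta\in B(\cS)$. Setting $\lambda_\delta:=(T_\delta v_\delta)(\hat\pi)$ and passing to the limit in the recursion gives $v_\delta+\lambda_\delta=T_\delta v_\delta$.

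For stage (ii), $\{v_\delta\}_{\delta\in(0,1/d)}$ is uniformly bounded by $2K$, and the same manipulation as in Lemma~\ref{lem:v.alpha.span} applied to the $\delta$-equation gives
\[|v_\delta(\pi)-v_\delta(\bar\pi)|\le|\gamma|\sup_{\pi'\in\cS_\delta}\left|\ln\frac{s(\pi,\pi')}{s(\bar\pi,\pi')}\right|,\]
which is a modulus of continuity uniform in $\delta$ since $s$ is continuous and bounded below by $\tilde s>0$. The constants $\lambda_\delta$ are bounded by Lemma~\ref{lm:z.bounds}. Arzel\`a--Ascoli and a further extraction then yield $\delta_k\downarrow 0$, a continuous bounded $v$, and $\lambda\in\bR$ with $v_{\delta_k}\to v$ uniformly and $\lambda_{\delta_k}\to\lambda$. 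By the same dominated-convergence argument as in the proof of Theorem~\ref{th:discounted} (with integrable envelope provided by \eqref{A.1} and the bound $\|v_{\delta_k}\|\le 2K$), the integrand inside $T_{\delta_k}$ is continuous in $\pi'\in\cS$, and any $\pi^\ast\in\cS$ can be approximated by elements of $\cS_{\delta_k}$; this forces $\sup_{\pi'\in\cS_{\delta_k}}\to\sup_{\pi'\in\cS}$, so passing to the limit in $v_{\delta_k}+\lambda_{\delta_k}=T_{\delta_k}v_{\delta_k}$ yields exactly \eqref{eq:Bellman0.2}.

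For stage (iii), writing the Bellman equation as $e^{v(\pi)+\lambda}\le\bE[e^{\gamma(\ln\langle\pi',w(1)\rangle+\ln s(\pi,\pi'))+v(G(\pi',w(1)))}]$ for every $\pi'$, iteration along any admissible $\pi$ and independence of $(w(t))$ yield
\[e^{n\lambda+v(\pi(0))}\le \bE\!\left[e^{\gamma\sum_{t=0}^{n-1}Z_t(\pi)+v(\pi(n-))}\right],\]
with $Z_t(\pi):=\ln\langle\pi(t),w(t+1)\rangle+\ln s(\pi(t-),\pi(t))$. Taking logarithms, absorbing $v(\pi(n-))$ into a bounded error via $\|v\|\le 2K$, dividing by $\gamma n$ (reversing the inequality since $\gamma<0$), and sending $n\to\infty$ give $\lambda/\gamma\ge J(\pi)$. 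A measurable selector $\tilde\pi$ of the $\sup_{\pi'}$ in the Bellman equation (existing by compactness of $\cS$ and joint continuity of the integrand) turns every inequality into equality and hence achieves $J(\tilde\pi)=\lambda/\gamma$, so $\Lambda=\lambda/\gamma=\sup_\pi J(\pi)$.

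The main obstacle is the vanishing restriction in stage (ii): one has to guarantee that $\sup_{\pi'\in\cS_{\delta_k}}$ converges to $\sup_{\pi'\in\cS}$ simultaneously with the uniform convergence $v_{\delta_k}\to v$, which requires the integrand to be continuous up to the boundary of the simplex. This is handled by the integrability envelope delivered by \eqref{A.1} together with the uniform span bound $K$. A secondary, more technical point is the measurable selection needed in stage (iii), which is standard given compactness of $\cS$ and joint continuity in $(\pi,\pi')$, but must be stated carefully because the optimiser depends on the current pre-rebalancing state.
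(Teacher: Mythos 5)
Your proposal follows essentially the same route as the paper's proof: solve the $\delta$-restricted equation on $\cS_\delta$ by the span-contraction property, pass to the limit $\delta\downarrow 0$ via Arzel\`a--Ascoli using the uniform bound $2K$ and the $s$-based equicontinuity estimate, and then extend the supremum from the restricted sets to all of $\cS$ by continuity of the right-hand side in $\pi'$. Your write-up is in fact somewhat more complete than the paper's: stage (i) carries out the centered fixed-point iteration that the paper only asserts (via a garbled self-reference in place of the local-contraction proposition), and stage (iii) supplies the verification that $\Lambda=\lambda/\gamma=\sup_{\pi}J(\pi)$ via iterating the equation and a measurable selector, a step the paper's proof of this theorem omits and implicitly borrows from the argument of Theorem~\ref{th:main1}.
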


\begin{proof}
For any fixed $\delta\in (0,1/d)$, combining \eqref{eq:span.contr.bounded} with Theorem~\ref{th:main3} we get that there exists a unique (up to an additive constant) function $v_{\delta} \in B(S)$ and a constant $\lambda_{\delta}\in\bR$ such that
\begin{equation}\label{eq:final1}
v_{\delta}(\pi) +\lambda_{\delta}=\gamma \sup_{\pi'\in \cS_{\delta}}\bigg[ \mu^{\gamma} \Big( \ln\langle \pi', w(1)\rangle+\ln s(\pi,\pi')+\gamma ^{-1}v_{\delta}(G(\pi',w(1)))\Big)\bigg].
\end{equation}
Now, let $\hat\pi:=(1/d,\ldots,1/d)$ and $\bar v_{\delta}(\pi):=v_{\delta}(\pi)-v_{\delta}(\hat\pi)$ for $\delta\in (0,1/d)$; note that $\bar v_{\delta}$ also solves \eqref{eq:final1} and we get $\| \bar v_{\delta}\|<2K$ due to \eqref{eq:span.contr.bounded} and property $\bar v_{\delta}(\hat\pi)=0$. Thus, the family of functions $\{\bar v_{\delta}\}_{\delta\in (0,1/d)}$ is uniformly bounded and equicontinuous since, by \eqref{eq:final1}, we get
\begin{equation}
 v_{\delta}(\pi) -  v_{\delta}(\bar \pi) \leq |\gamma| \sup_{\pi'\in \cS_{\delta}}\left|\ln{s(\pi,\pi')\over s(\bar \pi,\pi')}\right|.
\end{equation}
Consequently, using Arzela-Ascoli theorem, we know that there exists a function $v\in B(\cS)$ and a subsequence $(\delta_n)_{n\in\bN}$, $\delta_n\searrow 0$, such that $v_{\delta_n}\to v$ (uniformly) and $\lambda_{\delta_n}\to\lambda$, as $n\to\infty$. Thus, due to uniform convergence, we get
\begin{equation}\label{eq:final2}
v(\pi) +\lambda=\gamma \sup_{\pi'\in \cS_0}\bigg[ \mu^{\gamma} \Big( \ln\langle \pi', w(1)\rangle+\ln s(\pi,\pi')+\gamma ^{-1}v(G(\pi',w(1)))\Big)\bigg],
\end{equation}
where $\cS_0:=\bigcup_{\delta\in (0,1/d)} S_{\delta}$. Now, since right hand side of \eqref{eq:final2} is a continuous function of $\pi'$, we also get
\[
v(\pi) +\lambda=\gamma \sup_{\pi'\in \cS}\bigg[ \mu^{\gamma} \Big( \ln\langle \pi', w(1)\rangle+\ln s(\pi,\pi')+\gamma ^{-1}v(G(\pi',w(1)))\Big)\bigg],
\]
which concludes the proof.
\end{proof}

\section{Numerical examples}\label{S:examples}
In this section we illustrate the implications of Theorem~\ref{th:main2} and Theorem~\ref{th:main3}. Namely, for exemplary dynamics, we approximate the optimal strategies and maximise the objective function~\eqref{eq:obj1} using the associated Bellman's equations. For brevity, we study only the risk-averse case $\gamma<0$ and focus on a simple low-dimensional setting in which one can directly approximate optimal policies following the standard policy iteration algorithm under known dynamics, see \cite{Whi1990} for details. It should be noted that the policy iteration scheme could be also applied to a more realistic high-dimensional market setting in which transition kernel might be unknown, change in time, etc. For an overview of more advanced methods based on Reinforced Learning, $Q$-learning, etc., we refer to \cite{FeiYanWan2021,BasBhaBor2008,Bor2010,AraBisPra2021}.
The goal of this section is to illustrate how to numerically approximate the solution \eqref{eq:Bellman0} and recover the optimal policy used to construct optimal trading strategy. Namely, by considering  iterations of operator $T$ given by
\begin{equation}\label{eq:ex.T}
Tv(\pi)=\sup_{\pi'\in \mathcal S}\left[\mu^{\gamma} \Big( \ln\langle \pi', w(1)\rangle+\ln s(\pi,\pi')+v(G(\pi',w(1)))\Big)\right],\quad v\in B(\cS),
\end{equation}
we get a series of functions $T0$, $T^20$, $\ldots$, $T^n0$ that should converge in the span norm, as $n\to\infty$, to the solution of the Bellman's equation \eqref{eq:Bellman0}. By studying the difference of the consequent iterations, we have control over so called {\it regret} that can tell us how far from the optimal policy we are, see \cite{FeiYanWan2021}. Le us now present two illustrative examples that show why detailed look into Bellman's equation could lead to a more optimal trading choice under transaction costs when confronted with plausible alternatives.

First, we introduce a two-dimensional toy example based on finite state-space dynamics. Its aim is to show that even under simplistic setting, one can meaningfully increase the trading performance by solving a Bellman's equation and easily recover optimal barrier-hit strategy with a simple numerical scheme. 

Second, we introduce a three dimensional example based on correlated Gaussian noise with drift. Note that while in this paper we follow an i.i.d. framework under which very generic assumptions \eqref{A.1} and \eqref{A.2} are sufficient to guarantee the existence to Bellman's equation \eqref{eq:Bellman0}, similar reasoning could be applied to a more generic Markovian setting, see e.g. \cite{PitSte2016} where the discrete-time version of the Merton’s inter-temporal capital asset pricing model (CAPM) is considered.

\begin{example}[Toy example]\label{S:example1}
To illustrate how to approximate Bellman equation's solution and how the solution is linked to portfolio performance let us present a simple two-dimensional example based on finite state-space dynamics. Let $d=2$, $\gamma=-0.5$, and let  $S(t)=S(0)\cdot\prod_{i=1}^{t} e^{r(i)}$, $t\in\bN$, where $\{r(t)\}_{t\in\bN}$ is an i.i.d. sequence of log-returns such that
\begin{equation}\label{eq:toy1}
\bP[r_i(t)=(\ln 1.5,\ln 0.5)]=\tfrac{1}{2},\quad \bP[r_i(t)=(\ln 0.6,\ln 1.8)]=\tfrac{1}{2}.
\end{equation}
Note that the sequence $\{r(t)\}_{t\in\bN}$ satisfies assumption \eqref{A.1}. Let us assume that the proportional transaction cost penalty is antisymmetric with 10\%/20\% costs, i.e let the penalty function be given by $d(x):=\langle c, [x]^+\rangle+ \langle h, [x]^-\rangle$, $x\in\bR^2$, for $c=(0.1,0.2)$ and $h=(0.2,0.1)$. Noting that for any $(\pi_1,\pi_2)\in\cS$ we get $(\pi_1,\pi_2)=(\pi_1,1-\pi_1)$ and using notation $\tilde v(\pi):=v(\pi,1-\pi)$ as well as $\tilde \pi :=(\pi,1-\pi)$  we can rewrite operator $T$ given in \eqref{eq:ex.T} as
\begin{equation}\label{eq:toy2}
T\tilde v(\pi)  =\sup_{\pi'\in [0,1]}
\left[\ln s(\tilde\pi,\tilde\pi')+\tfrac{1}{\gamma}\ln\bE\left(e^{\gamma \ln\langle \tilde \pi', w(1)\rangle+\gamma v(G(\tilde \pi',w(1)))}\right)\right]
\end{equation}
For the dynamics introduced in \eqref{eq:toy1}, we get that \eqref{eq:toy2} is equal to 
\[
\sup_{\pi'\in [0,1]}
\left[\ln s(\tilde\pi,\tilde\pi')-\tfrac{\ln 2}{\gamma}+\tfrac{1}{\gamma}\ln\left[e^{\gamma \ln (0.5+\pi')+\gamma\tilde v\left(\frac{3\pi'}{2\pi'+1}\right)}\\+e^{\gamma \ln (1.8-1.2\pi')+\gamma \tilde v\left(\frac{\pi'}{3-2\pi'} \right)}\right]\right],
\]
so that a simple univariate iterration scheme could be applied to recover the value of $T\tilde v$ given $\tilde v$. In fact, simple numerical calculations show that the value of iterated value function (centered with constant from \eqref{eq:constant} to increase calculation robustness) stabilizes relatively fast. This is illustrated in Figure~\ref{F:1}

\begin{figure}[htp!]
\begin{center}
\includegraphics[width=0.45\textwidth]{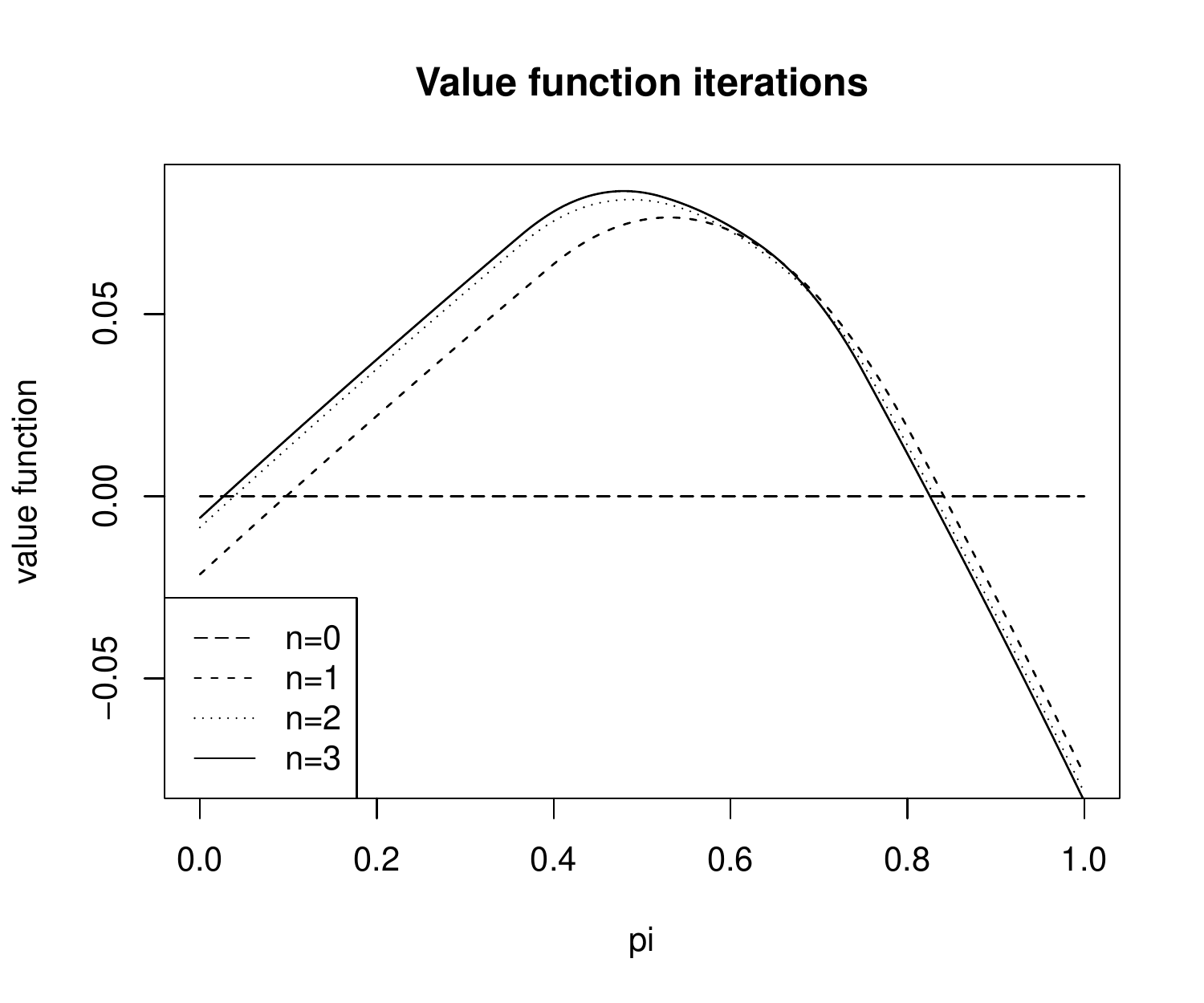}
\includegraphics[width=0.45\textwidth]{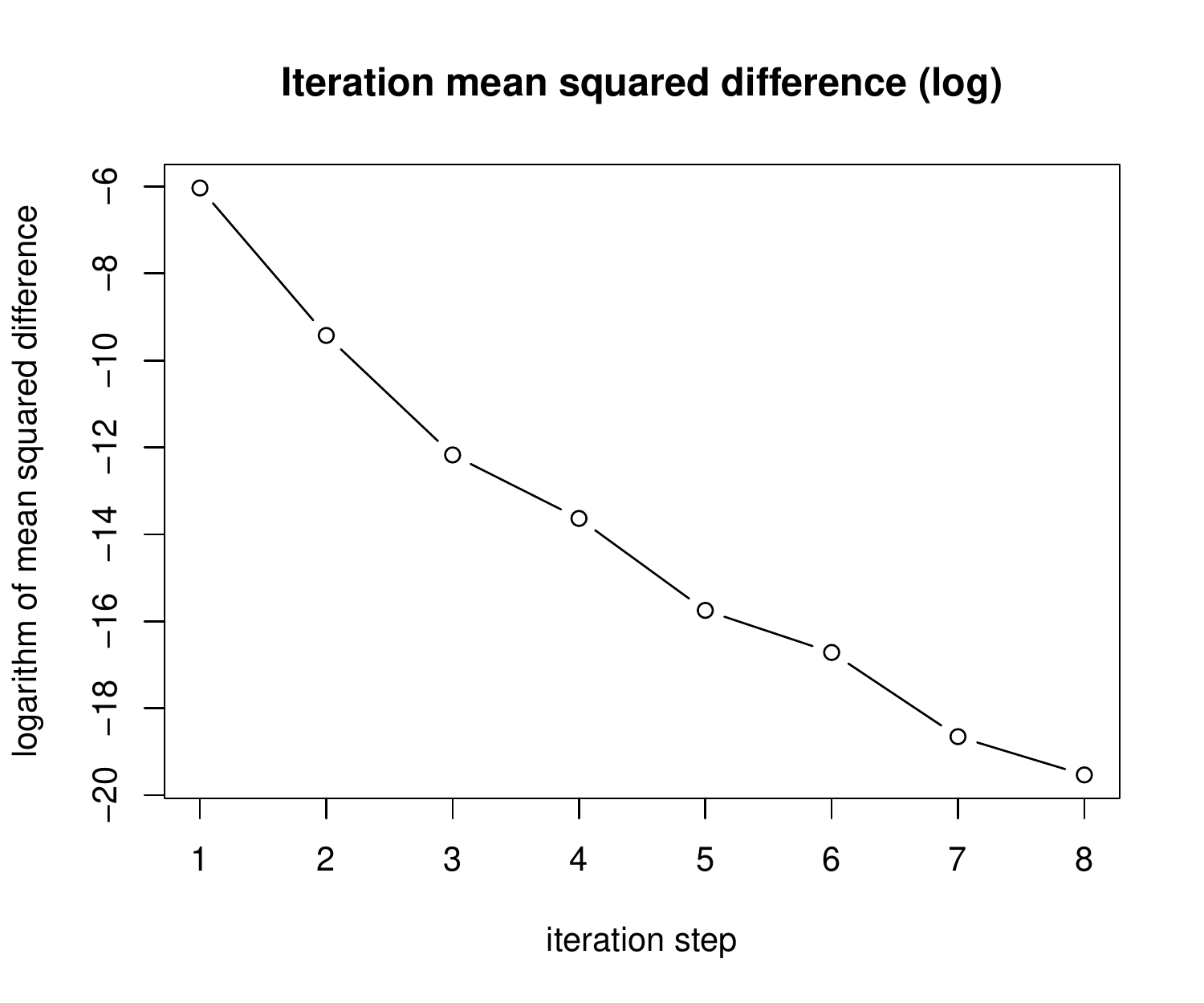}
\end{center}
\caption{The plot illustrates policy iteration convergence rates for Example~\ref{S:example1}. The left exhibit shows values of $T^n0$ as a function of the first weight, for $n=0,1,2,3$; note that we used centering constant from \eqref{eq:constant} to center the values of $(T^n0)$ to better present convergence in the span norm. The right plot shows logarithm of the mean squared difference of the subsequent (centered) iterations for $n=1,\ldots,9$.}
\label{F:1}
\end{figure}

Noting that the convergence rate is fast, we decided to set $n=8$ and use optimal strategy induced by $T^80$ in the optimisation. To illustrate the usefulness of risk-sensitive approach we study the asymptotic dynamics of the wealth function under various trading strategies for the dynamics introduced in \eqref{eq:toy1}. Namely, we consider four trading strategies: (1) Static buy-and-hold asset 1 strategy; (2) Static buy-and-hold asset 2 strategy; (3) Dynamic proportion strategy; (4) Dynamic strategy induced by risk-sensitive framework.

To compare the performance of trading strategies on a single long trajectory, we simulate a single realisation of $(r(t))_{t=1}^{T}$, for $T=5000$, with initial capital $W(0)=1$, Then, we apply the trading strategies to the dataset; the proportion in strategy (3) have been chosen in such a way that the final portfolio value is the highest, i.e. we decided to choose the trajectory-optimal proportion. The log-wealth evolution as well as trading strategy profiles could be found in Figure~\ref{F:2}. 

\begin{figure}[htp!]
\begin{center}
\includegraphics[width=0.45\textwidth]{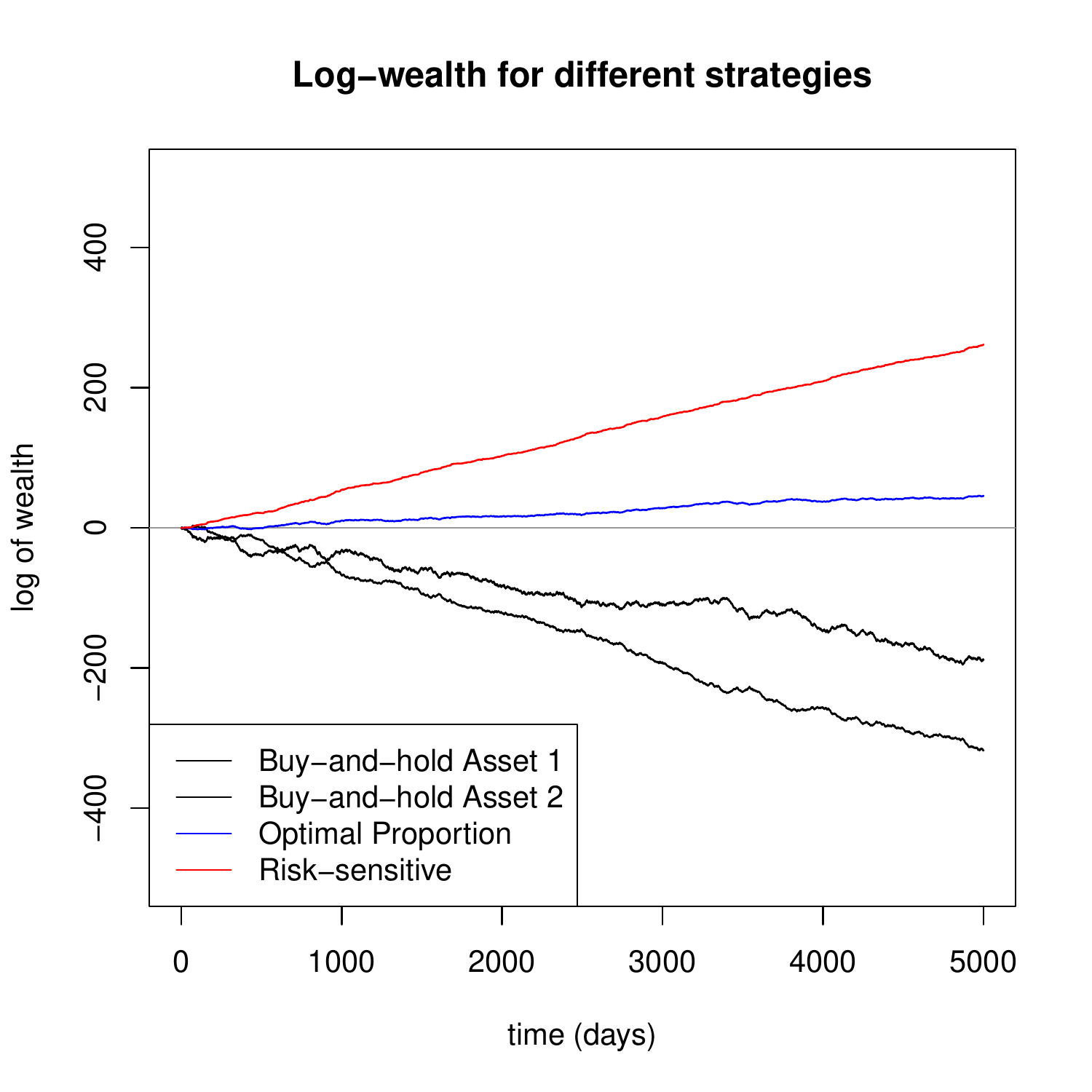}
\includegraphics[width=0.45\textwidth]{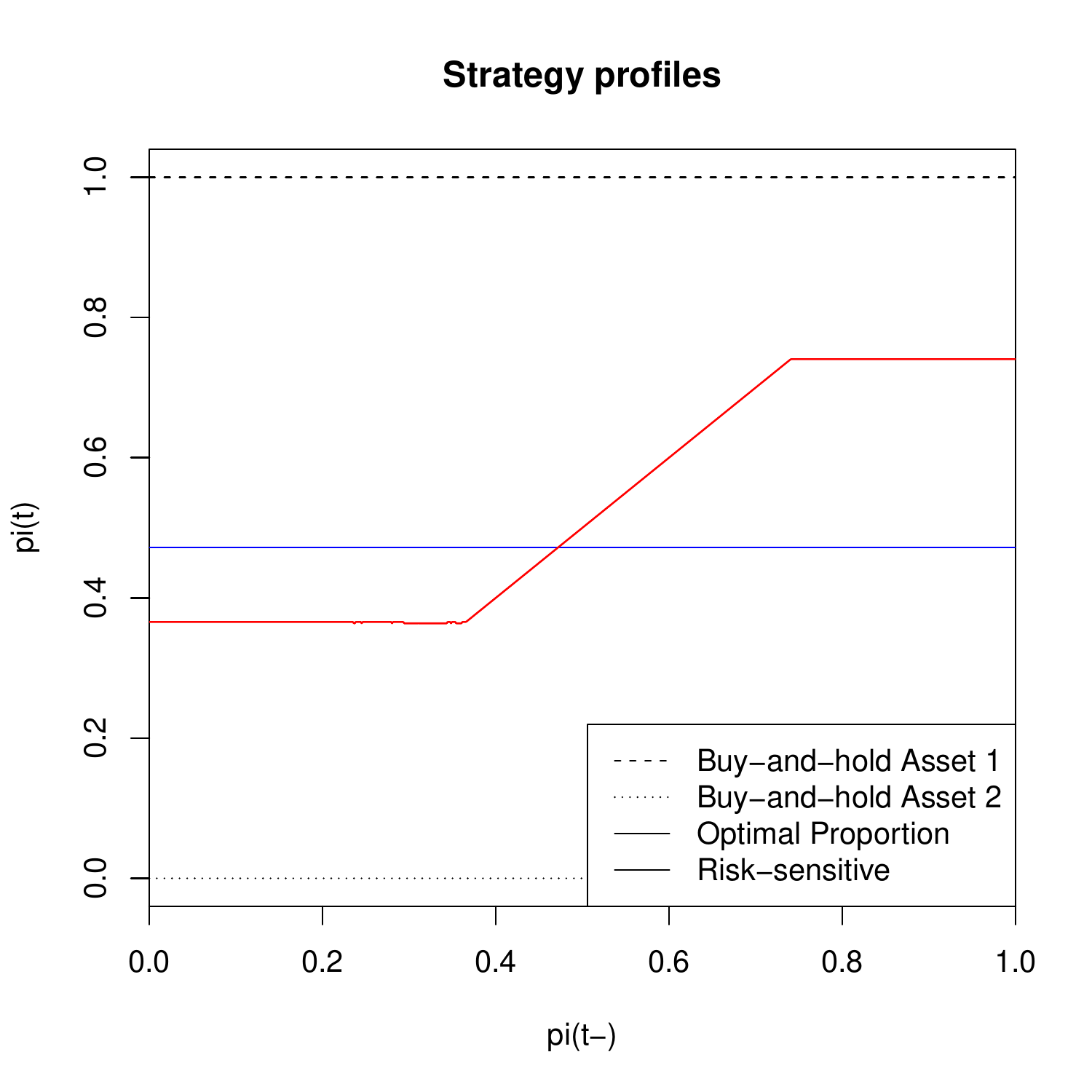}
\end{center}
\caption{In the left exhibit, we present the values of the log-wealth process $\ln W(t)$ for a single trajectory and $t=1,\ldots,5000$ under Example~\ref{S:example1} dynamics. One can see that the risk-sensitive strategy is outperforming all others and leads to stable time-growth. The right exhibit presents the trading profile of all strategies as a function of the first weight. As expected, the risk sensitive strategy is in fact a barrier strategy, i.e. no trading is made in we are inside a predefined interval (ca. $[0.38,0.75]$) and a push-back in initiated if we fall outside of it.}
\label{F:2}
\end{figure}

From the plot a couple of things can be deduced. First, the risk-sensitive strategy outperforms all other strategies and shows the benefits of dynamic rebalancing. In particular, while the static strategies lead to losses, the dynamic risk-sensitive strategy produces relatively stable gains. Second, as expected, the risk-sensitive induced strategy is a barrier-trading strategy. In other word, if for a given day $t$ the (proportion) value $\pi(t-)$ falls outside of some interval (ca. $[0.38,0.75]$), then the trading strategy push it into this interval with cost-efficient approach. On the other hand, if the value is inside the interval, then it is optimal to not impose any trading. Third, we see that it is advised to study dynamic trading strategies even in the most simplistic transaction-cost setting -- the performance improvement coming from applying risk-sensitive strategy to the problem looks material even when confronted with optimal proportion strategy.

Next, to fully understand the balance between the pay-off and the risk, we decided to calculate cumulative log-wealth for multiple trajectories and check their performance using assessment metrics. Namely, we consider the (time-normalised) mean, standard deviation, and entropy function $\mu^{\gamma}$; note the time-averaged entropy of log-wealth corresponds directly to the risk-sensitive objective function \eqref{eq:obj1}. For completeness, we also added values for entropy second-order Taylor expansion based on mean and variance which illustrates the link between risk-sensitive framework and mean-variance framework. Note that while mean-variance trading might lead to time-inconsistency, the risk-sensitive trading is time-consistent, see~\cite{BiePli2003,BieCheCia2021} for details. Also, to better understand the difference between risk-sensitive and risk-neutral frameworks, we decided to  include the results for strategy similar to (4) but with risk-sensitivity parameter set to $\gamma=-0.0005$, which approximates the risk-neutral setting, see \cite{DiMSte2006b}. With slight abuse of notation, we call it the {\it risk-neutral strategy}. The aggregated results are presented in Figure~\ref{F:3}.

\begin{figure}[htp!]
\begin{center}
\begin{tabular}{l|r|r|r|r}
 Strategy & Mean & Std & Mean+$\tfrac{\gamma}{2}\cdot$ Variance & Entropy ($\mu^{\gamma}$) \\\hline
Buy-and-hold asset 1& -0.053 & 0.029 & -0.106 & -0.101 \\
Buy-and-hold asset 2&  -0.053& 0.041 & -0.156 & -0.146
 \\
 Optimal proportion & 0.005 & 0.009 &  0.000& 0.000\\
 Risk-sensitive & 0.049 & {\bf 0.009} & {\bf 0.044} & {\bf 0.044}\\
 Risk-neutral & {\bf 0.050} & 0.012 & 0.042 & 0.043
\end{tabular}\label{F:3}
\caption{The table presents time-normalised performance metrics for trading strategies introduced in Example~\ref{S:example1}. The outputs are based on a strong Monte Carlo sample of size 20\,000 applied to log-wealth at time $T=250$. For completeness, we also added the results for risk-neutral dynamic strategy.}
\end{center}
\end{figure}

By looking at Figure~\ref{F:3} we see that the risk-sensitive trading strategy outperforms trading strategies (1)-(3). While the normalised log-mean for the risk-neutral  strategy is higher compared to risk-sensitive strategy (which is in fact expected as the log-mean reflects directly the objective function in the risk-neutral setting), it also results in higher variance. In other words, in the risk-sensitive case, the smaller mean (ca. 2\% decrease) is compensated by  risk decrease (ca. 25\% variance reduction) which seems like a plausible trade-off. This suggests that the risk-sensitive strategy might be the optimal choice among all considered strategies. Also, as expected, the risk-sensitive strategy has the highest entropy among all strategies.
\end{example}

\begin{example}[Gaussian noise with a drift]\label{S:example2}
In this example, we focus on a three dimensional correlated Gaussian noise with a positive drift. Let $d=3$, and let  $S(t)=S(0)\cdot\prod_{i=1}^{t} e^{r(i)}$, $t\in\bN$, where $\{r(t)\}_{t\in\bN}$ is an i.i.d. sequence of log-returns such that $r(t)\sim \mathcal{N}(\mu,\Sigma)$, where
\begin{equation}\label{eq:3dim1}
\mu:=0.001\cdot (2.5,1.5,2.0),\quad\textrm{and}\quad \Sigma:=0.0008\cdot
\begin{bmatrix}
 \phantom{-} 3.0 &-1.0 & -0.5 \\
-1.0 & \phantom{-}1.5 & \phantom{-}0.5\\
-0.5 & \phantom{-}0.5  & \phantom{-}2.0
\end{bmatrix}.
\end{equation}
Next, let $\gamma=-5$ and let the penalty function be given by $d(x)=\langle c, [x]^+\rangle+ \langle h, [x]^-\rangle$, $x\in\bR^3$, for $c=0.004\cdot (2.0,1.6,1.0)$ and $h=0.004\cdot (1.0,1.6,2.0)$. The parameters are fixed in such a way that we have both positive and negative correlation in assets; note the correlation coefficients for $r_i(t)$ are (ca.) equal to $\rho_{12}=-0.48$, $\rho_{13}=-0.20$ and $\rho_{23}=0.29$. Also, the  transaction costs reflect pay-off between return (mean) and risk (variance). The assumption \eqref{A.1} is satisfied for $\{r(t)\}_{t\in\bN}$ as the moment generating function for Gaussian random variables exists and assumption \eqref{A.2} is satisfied as the support of $\{r(t)\}_{t\in\bN}$ is full. 

First, as in Example~\ref{S:example1}, we approximate the solution to Bellman's equation. The approximation was made on a discrete $\pi$-grid of step-size 0.02. By analysing the consequent differences and the shape of the approximated value functions, we decided to set $n=5$, i.e use approximated value of $T^50$ to determine the trading strategy. The approximation results are illustrated in Figure~\ref{F:4}; note that for any $(\pi_1,\pi_2,\pi_3)\in\cS$ we get $\pi_3=1-(\pi_1+\pi_2)$, so it is sufficient to analyse a two-dimensional graphs.

\begin{figure}[htp!]
\begin{center}
\includegraphics[width=0.48\textwidth]{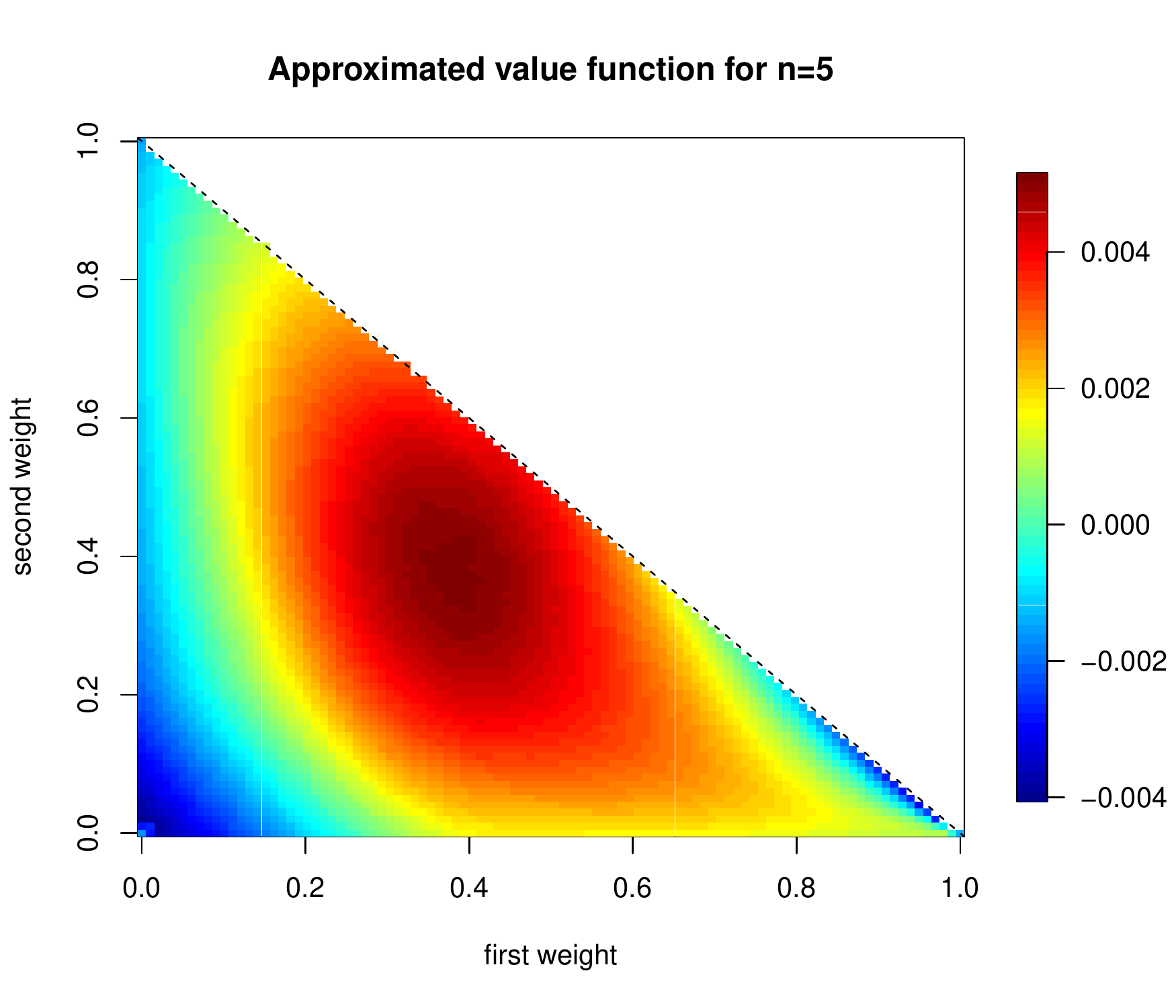}
\includegraphics[width=0.42\textwidth]{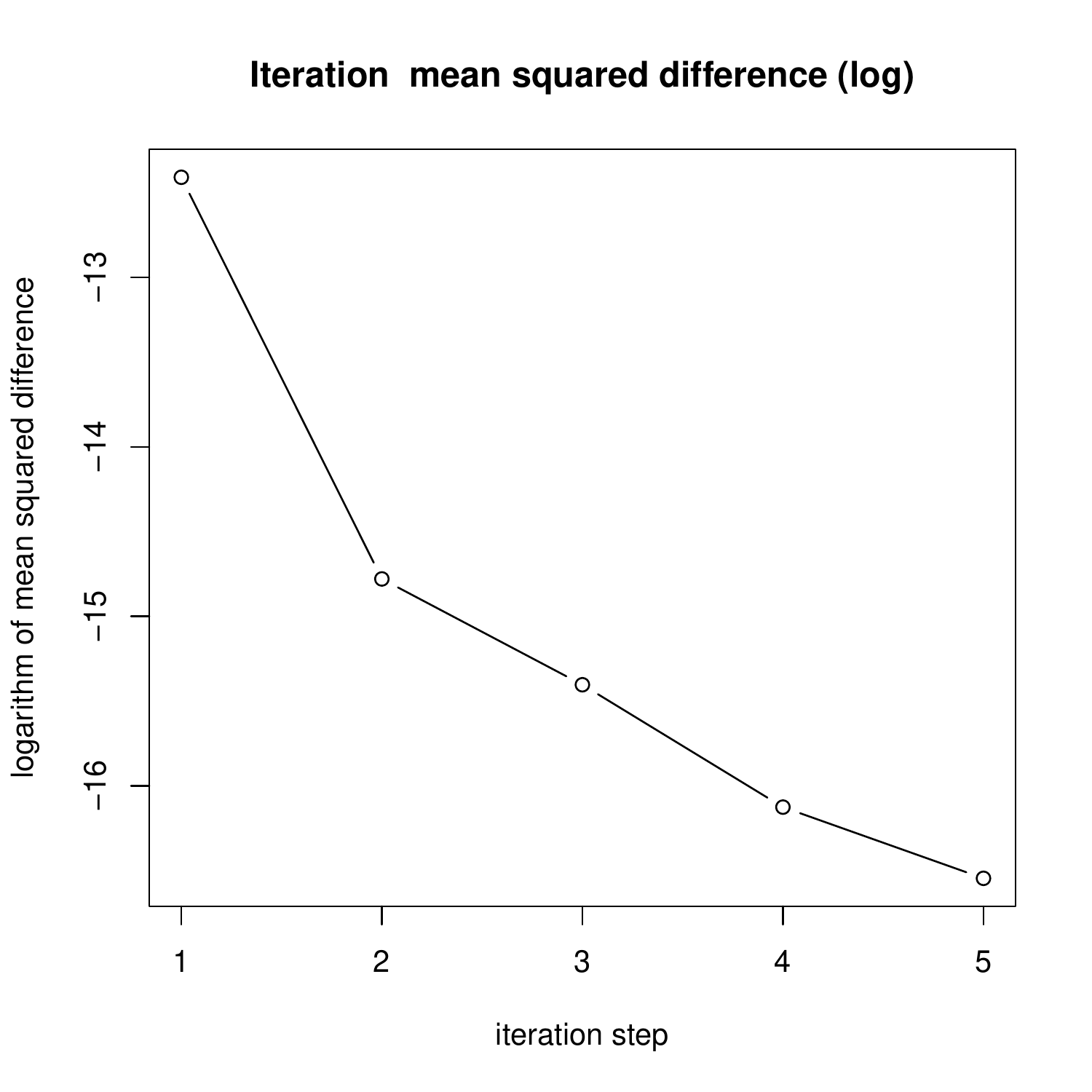}
\end{center}
\caption{The plot illustrates approximated value function (left) and iteration convergence rates (right) for Example~\ref{S:example2}. The left exhibit shows the approximated values of $T^50$ as a function of the first and second weight; note that we used centering constant from \eqref{eq:constant}. The right plot shows logarithm of the mean squared distance between the subsequent (centered) iterations for $n=1,\ldots,5$.}
\label{F:4}
\end{figure}

From Figure~\ref{F:4} we see that the approximated value function is regular with maximal value around point $(0.37, 0.40)$. Recalling that entropy utility function has a second order Taylor expansion $\mu^{\gamma}(X)\approx \bE[X] +\frac{\gamma}{2}\textrm{Var}[X]$, one would expect that this point (approximatelly) corresponds to an optimal allocation obtained using Markowitz portfolio optimization. This could be easily verified by solving a quadratic programming problem of the form
\begin{equation}\label{eq:ex2:markowitz}
\pi^{*}:={\arg\min}_{\pi\in\cS}(\pi^{T}\mu + \tfrac{\gamma}{2} \pi^{T} \Sigma \pi),
\end{equation}
for which we get $\pi^*\approx(0.3705357,0.4017857,0.2276786)$. 
Let us now confront the (approximated) trading strategy with other alternative strategies as in Example~\ref{S:example1}. Namely, we consider five trading strategies: (1) Static buy-and-hold asset 1 strategy; (2) Static buy-and-hold asset 2 strategy; (3) Static buy-and-hold asset 3 strategy; (4) Dynamic Markowitz proportion strategy; (5) Dynamic strategy induced by risk-sensitive framework. In strategy (4) we shift the allocation to the state induced by~\eqref{eq:ex2:markowitz}, i.e. we follow the optimal Markowitz strategy for risk aversion $\gamma$ under no transaction costs. While the full illustration of the trading strategy obtained via Bellman's equation approximation (as presented in Figure~\ref{F:2}) is problematic (it would require four-dimensional plot) we can analyse strategy structure by looking at {\it no-action points} as well as {\it shift points}, i.e. sets of $\pi$'s such that no trading is executed if we are in the state $\pi$ and sets of all $\pi$'s that are the target state for some pre-trading initial state. The trading results for an exemplary long single trajectory as well as simplified strategy profile is presented in Figure~\ref{F:5}.

\begin{figure}[htp!]
\begin{center}
\includegraphics[width=0.48\textwidth]{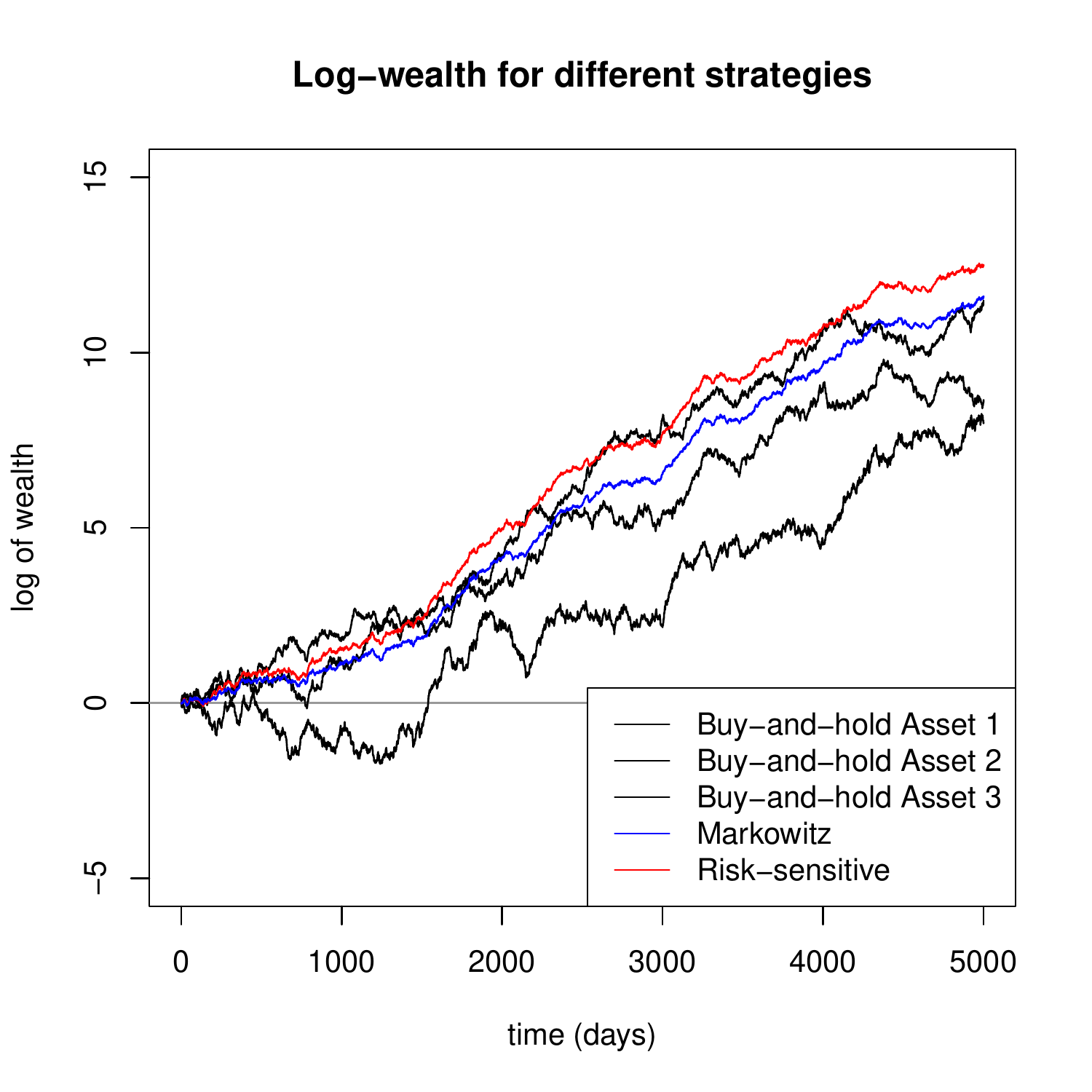}
\includegraphics[width=0.48\textwidth]{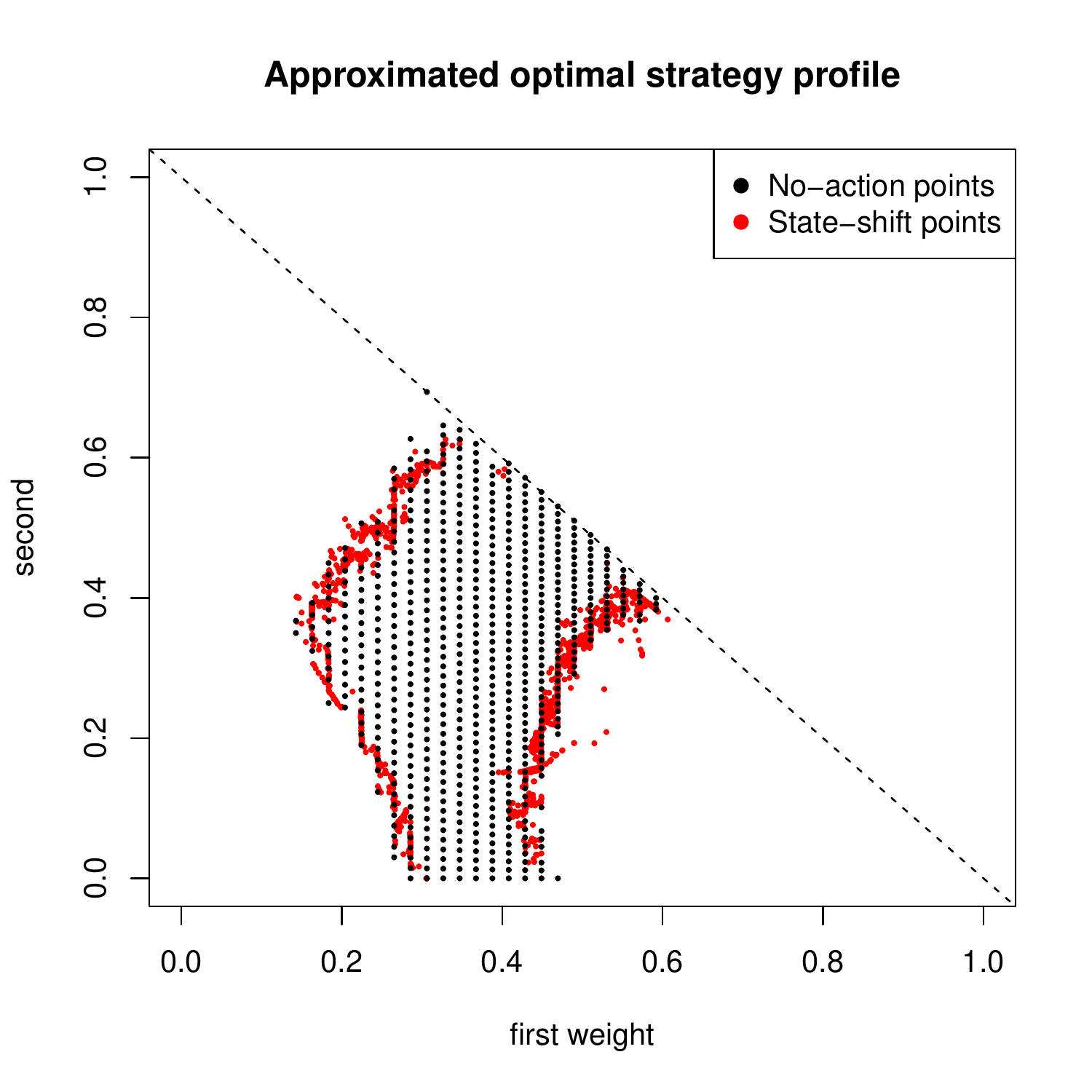}
\end{center}
\caption{In the left exhibit, we present the values of the log-wealth process $\ln W(t)$ for a single trajectory and $t=1,\ldots,5000$ under Example~\ref{S:example2} dynamics. One can see that the risk-sensitive strategy is outperforming all others. The right exhibit presents the structure of the risk-sensitive trading strategy -- no trading is executed if we are nearby Markowitz-induced optimal point and a push-back strategy is applied if we fall outside of the black-point area.}
\label{F:5}
\end{figure}
Next, we analyse trading performance by looking into (time-averaged) performance metrics introduced in Example~\ref{S:example1}. The aggregated results are presented in Figure~\ref{F:6}.

\begin{figure}[htp!]
\begin{center}
\begin{tabular}{l|r|r|r|r}
 Strategy & Mean & Std & Mean+$\tfrac{\gamma}{2}\cdot$ Variance & Entropy ($\mu^{\gamma}$)  \\\hline
Buy-and-hold asset 1& 0.0025 &0.0031  &-0.00358 &-0.00561\\
Buy-and-hold asset 2&  0.0015 & 0.0022 & -0.00149 &-0.00147\\
Buy-and-hold asset 3&  0.0020 & 0.0025 & -0.00203 &-0.00193\\
Markowitz proportion & 0.0024 & {\bf 0.0012} &  0.00144 & 0.00143\\
Risk-sensitive & {\bf 0.0026} & 0.0013 &  {\bf 0.00151} &  {\bf 0.00151}
\end{tabular}
\caption{The table presents time-normalised performance metrics for trading strategies introduced in Example~\ref{S:example2}. The outputs are based on a strong Monte Carlo sample of size 20\,000 applied to log-wealth at time $T=250$.}
\label{F:6}
\end{center}
\end{figure}

From Figure~\ref{F:6} we see that risk-sensitive trading strategy is outperforming all other strategies and has the highest entropy, as expected. While the variance for Markowitz strategy is slightly smaller, the Markowitz allocation leads to smaller mean -- the payoff between the two is better for risk-sensitive strategy as could be seen by looking into both {\it Entropy} and ${\textrm Mean}+\tfrac{\gamma}{2}\textrm{Variance}$ performance criterions. To better understand the difference between risk-sensitive strategy and Markowitz strategy it is best to look into trading intensity. While the Markowitz strategy has homogeneous trading intensity (as expected, as the strategy should always push the allocation back to the fixed point), the risk-sensitive strategy shows more intense trading on rare occasions, i.e. when the process falls outside of the zone presented in Figure~\ref{F:5}. In fact, while the trading for Markowitz strategy was initiated for almost all of the considered days, the trading for risk-sensitive strategy was initiated in only 176 days (ca. 3.5\% of sample size). Moreover, the aggregated trading intensity for risk-sensitive strategy, measured e.g. by cumulative product of capital decays, is much smaller. This shows that intense re-balancing could in fact negatively impact performance, especially in the transaction cost regime.


\end{example}

\nocite{MacThoZie2011}
\nocite{Ste2005}
\nocite{Ste2009}
\nocite{DunDunSte2011}
\nocite{ChrIrlLud2017}
\nocite{Ste2011b}

\bibliographystyle{siamplain}
\bibliography{RSC_bibliografia}

\end{document}